\definecolor{slategray}{RGB}{112,138,144}
\newcommand{\neu}[1]{#1}
\newtheorem{satz}{Theorem}
\newtheorem{lemma}[satz]{Lemma}
\newtheorem{koro}[satz]{Corollary}
\newcommand{\beq}{\begin{equation}}
\newcommand{\eeq}{\end{equation}}
\newcommand{\bea}{\begin{eqnarray}}
\newcommand{\eea}{\end{eqnarray}}
\newcommand{\beast}{\begin{eqnarray*}}
\newcommand{\eeast}{\end{eqnarray*}}
\newcommand{\beal}{\begin{align}}
\newcommand{\eeal}{\end{align}}
\newcommand{\bsp}{\begin{split}}
\newcommand{\esp}{\end{split}}
\newcommand{\bbbone}{{\mathchoice {\rm 1\mskip-4mu l} {\rm 1\mskip-4mu l}    {\rm 1\mskip-4.5mu l} {\rm 1\mskip-5mu l}}}
\newcommand{\bC}{{\mathbb C}}
\newcommand{\bI}{{\mathbb I}}
\newcommand{\bN}{{\mathbb N}}
\newcommand{\bR}{{\mathbb R}}
\newcommand{\bT}{{\mathbb T}}
\newcommand{\bZ}{{\mathbb Z}}
\newcommand{\de}{\delta}
\renewcommand{\th}{\theta} 
\newcommand{\vphi}{\varphi}
\newcommand{\veps}{\varepsilon}
\newcommand{\cA}{{\mathcal A}}
\newcommand{\cH}{{\mathcal H}}
\newcommand{\cV}{{\mathcal V}}
\newcommand{\scA}{{\mathscr A}}
\newcommand{\scB}{{\mathscr B}}
\newcommand{\scC}{{\mathscr C}}
\newcommand{\scF}{{\mathscr F}}
\newcommand{\scH}{{\mathscr H}}
\newcommand{\scI}{{\mathscr I}}
\newcommand{\scL}{{\mathscr L}}
\newcommand{\scM}{{\mathscr M}}
\newcommand{\scN}{{\mathscr N}}
\newcommand{\scT}{{\mathscr T}}
\newcommand{\scZ}{{\mathscr Z}}
\newcommand{\del}{\partial}
\newcommand{\gtoas}[1]{{\;\mathop{\longrightarrow}\limits_{#1}\;}}
\newcommand{\abs}[1]{{\left\vert #1 \right\vert}}
\newcommand{\norm}[1]{{\left\Vert #1 \right\Vert}}
\newcommand{\sfrac}[2]{{\textstyle \frac{#1}{#2}}}
\newcommand{\pli}{\prod\limits}
\newcommand{\E}{{\rm e}}
\newcommand{\I}{{\rm i}}
\newcommand{\dd}{{\rm d}}
\renewcommand\Re{\operatorname{Re}}
\newcommand{\grab}{\bar{\rm a}}
\newcommand{\gra}{{\rm a}}
\newcommand{\ul}[1]{\underline{#1}}
\newcommand{\Tr}{{\rm tr}}
\newcommand{\dirT}{\vec T}
\newcommand{\cdirT}{\vec{\scT}}
\newcommand{\Th}{\Theta}
\newcommand{\bTh}{\overline{\Theta}}
\newcommand{\bth}{\bar\theta}
\newcommand{\covC}{C}
\newcommand{\tV}{\tilde {\rm V}}
\newcommand{\teu}{t}
\newcommand{\intA}{{\mathcal{A}}}
\newcommand{\intV}{{\mathcal{V}}}
\newcommand{\Hchli}{\mbox{\sc h}_{\rm o}}
\newcommand{\Hach}{\mbox{\sc h}}
\newcommand{\Heps}{\Hach^{(\veps)}}
\newcommand{\sgm}{\;{\rm sgn}_-}
\newcommand{\gH}{\cH}
\newcommand{\ugH}{\tilde\cH}
\newcommand{\NbC}{\bC^{(N)}}
\newcommand{\siv}{s}
\newcommand{\iu}{\mathrm{i}}
\newcommand{\tnorm}[1]{|\!|\!|#1|\!|\!|}
\newcommand{\bracket}[2]{\langle#1\mid #2\rangle}
\renewcommand{\bbbone}{1}
\begin{document}

\title[Persistence of gaps for interacting fermionic systems]{Persistence of exponential decay and spectral gaps for interacting fermions}
\author{Wojciech de Roeck \& M.\ Salmhofer} 

\address{ 
Instituut voor Theoretische Fysica, K.~U.~Leuven, Celestijnenlaan~200D, 3001 Leuven, Belgium}

\email{wojciech.deroeck@kuleuven.be}

\address{ 
Institut f\" ur Theoretische Physik, Universit\" at Heidelberg,
Philosophenweg~19, 69120 Heidelberg, Germany}

\email{salmhofer@uni-heidelberg.de}

\date{\today}

\begin{abstract}
\noindent
We consider systems of weakly interacting fermions on a lattice. The corresponding free fermionic system is assumed to have a ground state separated by a gap from the rest of the spectrum. 
We prove that, if both the interaction and the free Hamiltonian are sums of sufficiently rapidly decaying terms, and if the interaction is sufficiently weak, then the interacting system has a spectral gap as well, uniformly in the lattice size. Our approach relies on convergent fermionic perturbation theory, thus providing an alternative method to the one used recently in \cite{H17}, and extending the result to include non-selfadjoint interaction terms. 
\end{abstract}

\maketitle

\section{Introduction}
In this paper we consider a general class of quantum many-body systems, namely systems of fermions with a Hamiltonian of the type $H = H_0 + H_I$, where $H_0$ describes independent particles (a quadratic Hamiltonian in second-quantized formalism) and $H_I$ is an interaction term which is assumed to be \emph{locally} small and to have a rapid spatial decay. For the purpose of illustration,
let us consider the simple, but relevant example of density-density interaction for spinless fermions, postponing the detailed and general theorems to a later stage:
$$
H_0=\sum_{x,x'} h_0(x,x') c^+_x c^-_{x'} -\mu \sum_x n_x,\qquad   H_I =  g \sum_{x, x'}v(x,x') n_x n_{x'}
$$
where $\mu \in \bR$, $g \in \bC$, $c_x^{\pm}$ are the fermionic creation/annihilation operators at site $x$, and $n_x=c_x^+c_x^-$ are number operators. The sums run over sites $x,x'$ of the finite regular lattice $\mathbb{Z}_L^d= (\mathbb{Z}/L\mathbb{Z})^d$ for some finite $L$, equipped with the graph distance $d(\cdot,\cdot)$.   Let us write $\Hchli$ for the bounded operator on $l^2(\mathbb{Z}_L^d)$ defined by the kernel $h_0$. A simplified version of our main result reads then 
 \begin{satz}\label{previewgaptheo}
 Assume that
 \begin{enumerate}
\item $h_0(x,x')=\overline{h_0(x',x)}$, i.e.\ $\Hchli$ and hence $H_0$ are self-adjoint.
\item  There are constants $c> 0$ and $C>0$ such that $|h_0(x,x')| \leq C \E^{-c d(x,x')}$ and $|v(x,x')| \leq C \E^{-c d(x,x')}$ for all $x$ and $x'$.
\item For some $\rho>0$, we have $\sigma(\Hchli-\mu) \cap [-\rho,\rho]=\emptyset$ (we denote by $\sigma(\cdot)$  the spectrum of an operator). 
\end{enumerate}
Then, for small enough $|g|$, $H$ has a simple eigenvalue $E_0$ such that $\Re E_0=\min (\Re \sigma(H))$ and
$$
\min \Re (\sigma(H) \setminus \{E_0\})-\Re E_0 \geq \rho -\mathcal{O}({|g|}^{\tfrac{1}{1+n}})
$$
where $n$ is the smallest integer {such that $n>d/2$}. The smallness condition on $g$ and the remainder term $\mathcal{O}({|g|}^{\tfrac{1}{n+1}})$ are uniform in the system size $L$. 
\end{satz}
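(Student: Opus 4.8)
The plan is to reduce the spectral statement to a convergent fermionic functional-integral expansion and to read off the gap from the decay of imaginary-time correlation functions, so that all constants come out uniform in $L$.

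First I would diagonalize the free problem. Since $\Hchli-\mu$ is self-adjoint with $\sigma(\Hchli-\mu)\cap[-\rho,\rho]=\emptyset$, I split $l^2(\mathbb{Z}_L^d)$ into the spectral subspaces below and above the gap and perform the corresponding particle--hole (Bogoliubov) transformation, so that the free ground state becomes the vacuum of new modes $\gamma_k$ and $H_0=E_0^{(0)}+\sum_k|\epsilon_k|\,\gamma_k^\dagger\gamma_k$ with $|\epsilon_k|\ge\rho$. This realizes the free many-body gap as $\min_k|\epsilon_k|\ge\rho$ and yields a free imaginary-time covariance $C$ whose frequency-space operator norm is $\le 1/\rho$ and which, by a Combes--Thomas estimate applied to the gapped single-particle operator, decays exponentially in both the spatial and the imaginary-time variables.

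Second, I represent the partition function $Z=\Tr\,\E^{-\beta H}$ and the imaginary-time correlation functions as a Grassmann integral with free covariance $C$ and interaction $gS_I$, and expand in $g$. Because the gap makes the theory ``massive'' ($C$ bounded and exponentially decaying, rather than singular on a Fermi surface), the Gram--Hadamard determinant bound for the fermionic Gaussian integration, combined with a tree/forest resummation of the connected contributions and the summability of $C$ and of $v$, produces the connected correlation functions --- in particular the two-point function and the self-energy $\Sigma$ --- as power series in $g$ with a radius of convergence $g_0>0$ that is uniform in $L$ and in $\beta$. Since the construction is purely algebraic in the interaction, $g$ may be taken complex and the output is analytic in $g$; this is what accommodates non-self-adjoint $H_I$. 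Writing the full single-particle Green's function as $G(z)=(z-(\Hchli-\mu)-\Sigma(z))^{-1}$, the convergent expansion shows $\Sigma(z)$ is analytic in a neighbourhood of the gap with small norm, so $G$ is bounded throughout the gap; feeding this, together with the interacting ground-state projection obtained from the same expansion, into a resolvent/Feshbach bound for the many-body $H$ isolates a simple lowest eigenvalue $E_0$ with $\Re E_0=\min\Re\sigma(H)$ and places the remaining spectrum at real-part distance close to $\rho$.

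The quantitative loss, and the exponent $\tfrac{1}{1+n}$, I expect to come from a single optimization over a resolution scale $M$. Certifying a gap of size $\rho-\delta$ forces one to resolve the spectrum (equivalently, the frequency dependence of $\Sigma$ and of the two-point function) down to scale $\sim 1/M$, and the corresponding truncation error decays only polynomially, like $M^{-1}$, because of the $1/\omega$ large-frequency tail of the fermionic propagator; on the other hand the perturbative correction at that resolution is controlled only by $|g|\,M^{n}$, where the growth exponent $n$ is the smallest integer $>d/2$ dictated by the $d$-dependent (Schatten-type) summability of the covariance in the relevant loop estimate. Balancing $|g|M^{n}$ against $M^{-1}$ gives the optimal scale $M\sim|g|^{-1/(1+n)}$ and hence $\delta=\mathcal{O}(|g|^{1/(1+n)})$. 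I expect the genuine obstacle to be precisely the uniform-in-$L$ (and $\beta$) convergence of the expansion: extracting volume-independent constants requires the determinant bound together with careful bookkeeping of the spatial and temporal decay of $C$ so that all sums over vertex positions and imaginary times converge, and this is exactly where the full strength of the convergent fermionic perturbation theory is needed.
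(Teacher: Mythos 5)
Your overall machinery matches the paper's: a convergent fermionic (Grassmann) expansion with Gram--Hadamard determinant bounds and tree resummation, uniform in $L$ and $\beta$, analytic in complex $g$, with a Combes--Thomas estimate supplying exponential decay of the free covariance. However, there is a genuine gap at the decisive step, namely how spectral information about the \emph{many-body} operator $H$ is extracted from the expansion. You propose to control the single-particle Green's function $G(z)=(z-(\Hchli-\mu)-\Sigma(z))^{-1}$ and then use a resolvent/Feshbach argument. This cannot work as stated: the quantity to be bounded is the gap of $H$ on Fock space, whose excitation spectrum is not captured by the two-point function or the self-energy (it contains multi-particle and particle-hole states, and for complex $g$ possibly nontrivial Jordan structure). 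Moreover, a Feshbach map with the ground-state projection requires a bound on the resolvent of $H$ restricted to the orthogonal complement in the gap region, which is essentially the statement to be proven --- the argument is circular unless some other input controls that restricted resolvent. The paper's solution is different: it proves exponential decay $\abs{\langle A(\tau);B\rangle_{\beta H}}\le C(A,B)\,\E^{-\rho\,d(0,\tau)}$ for \emph{arbitrary} operators $A,B$ on Fock space (Theorem \ref{expodecaythm}), passes to $\beta=\infty$ where the truncated correlation becomes $\Tr\bigl(P_0A\E^{-\tau H}(1-P_0)BP_0\bigr)$ (Lemma \ref{lem: rep of gap}), and then uses the rank-one trick of Lemma \ref{spectrallemma}: for every eigenvalue $E_j$, $j\neq 0$, one can choose $A,B$ so that this trace equals $\E^{-\tau E_j}$, forcing $\Re E_j\ge\rho$. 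It is the arbitrariness of $A,B$ --- not the two-point function --- that sees the whole spectrum. You also gloss over simplicity of $E_0$: the representation of Lemma \ref{lem: rep of gap} presupposes it, so the paper must bootstrap, iterating in $g$ in small (volume-dependent) steps and using a resolvent bound valid for non-normal matrices (Eq.\ \eqref{eq: apriori resolvent}) to propagate simplicity from $g=0$; this is a real issue for complex $g$, where $H$ is not self-adjoint and eigenvalues need not perturb nicely.

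On the exponent $|g|^{1/(1+n)}$, your scale-balancing heuristic ($|g|M^n$ versus $M^{-1}$ from frequency tails) lands on the right formula but misidentifies the mechanism. In the paper the loss comes from the Combes--Thomas bound of Theorem \ref{thm: decay from gap}: certifying decay rate $\rho-\epsilon$ costs $\alpha_{\rho-\epsilon}\le C\,\epsilon^{-n-1}$, where $n$ derivatives in the conjugation parameter $\kappa$ are needed so that the spatial sum converges uniformly in $L$ (via $k(2n)<\infty$, requiring $2n>d$, hence $n>d/2$). The convergence condition $\alpha_{\rho-\epsilon}\norm{V}_{1+\delta}<1$ with $\norm{V}_{1+\delta}=\mathcal{O}(|g|)$ then forces $\epsilon\sim|g|^{1/(n+1)}$, which is exactly the gap deficit. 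Without this link between the coupling-constant smallness condition and the blow-up of the decay constant near the spectral edge, the exponent is not actually derived.
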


Note that we do not need to assume translation invariance. 
Indeed, $\Hchli$ could be a Hamiltonian for a system in a weak random external potential (leaving the gap open), and no translation invariance is required for the interaction term either.  

Note also that $g$ may be complex. Our general result given in Theorem \ref{gaptheo} applies to general multibody interaction terms $H_I$ that need to be neither self-adjoint nor particle number conserving, only satisfy a smallness condition and a power-law decay of the kernels $v$ and $h_0$.  We decribe in Section  \ref{sec: decay estimates} how to specialize to Theorem \ref{previewgaptheo}. 

At first sight, it may seem surprising that the gap of the one-particle operator $\Hchli-\mu$ is $2\rho$, yet the lower bound for the gap of the many-body operator is at most $\rho$, not $2\rho$. Indeed, the lowest excitation (in the absence of interaction) is simply adding a fermion at energy $\rho$ or removing one at energy $-\rho$.  If we considered $H$ in a space with fixed fermion number, then indeed the lowest excitation would consist in raising a fermion from $-\rho$ to $\rho$ and we could expect a spectral gap close to $2\rho$.
  
 Within our approach, we do not obtain $n=0$ in the error estimate. This is in contrast to the case where $H_0$ can be represented as a frustration-free operator, treated in  \cite{H17,michalakis2013stability}.

The notion of a (volume-independent) spectral gap for a many-body system is crucial in quantum many-body theory. The rigorous theory got a considerable boost in the past decade, starting with the introduction of the quasi-adiabatic flow \cite{H05,automorphic} for lattice spin systems. The concept of this flow led to a definition of `gapped ground state quantum phases' that is now widely accepted: Two gapped ground states are in the same phase if their Hamiltonians can be continuously connected without closing the gap. 
Some prominent examples of rigorous results based on the quasi-adiabatic flow are (stability should be understood as referring to perturbations of the Hamiltonian that leave the gap open)
\begin{enumerate}
\item Stability of topological order \cite{bravyi2011short,bravyi2010topological}
\item  Stability of the area law of entanglement \cite{marien2016entanglement}
\item Quantization of Hall conductance \cite{HM,BBDF}, though this was also achieved with different methods in \cite{GMP}.
\end{enumerate}

As already indicated, the point of view in many of these results is to \emph{assume} that the gap remains open.  \emph{Proving} that there is a gap for a given Hamiltonian is in general hard. 
 It has been established for  weakly perturbed classical models, \neu{even with multiple ground states, \cite{borgs1996low,datta1996low,yarotsky2006ground}}, and more generally, for weak perturbations of frustration-free systems \cite{michalakis2013stability}. Furthermore, there is the \emph{martingale method}, see e.g.\ \cite{nachtergaele1996spectral} that can be used to prove a spectral gap for certain chains (one-dimensional systems).   Our paper adds a class of weakly interacting fermion systems to this list. 

We should point out now that most of what was discussed above (in particular the quasi-adiabatic flow) has only been explicitly established for spin systems up to now, and not for fermionic lattice systems. \neu{However, several recent papers have already  furnished some tools and results for the fermionic setting, e.g. \cite{katsura2017exact,monaco2017adiabatic}, and \cite{nachtergaele2017lieb} has started the task of systematically adapting the technical tools to fermionic systems. }

Let us now come back to our result and outline the paper. To make everything specific and reasonably concise, we assume that we are dealing with systems on a finite lattice. Continuum systems can be treated if either a short-distance regulator is included, or the spectrum of $\Hchli$ is sufficiently `widely spaced' at high energies. For a large part of our considerations, it suffices to assume that position space is a finite set $\Lambda$, and it does not matter whether it is a regular lattice or a more general graph. One essential point of our result is, of course, that the bounds are uniform in the number of elements of $\Lambda$, so that in the application, the limit of an infinite lattice or graph can be taken. 

Our proof relies on analyticity of fermionic perturbation theory. The truncated correlation functions are expressed as a convergent power series in the  interaction. In the form we use them here, the necessary (determinant) bounds were first proven in \cite{PeSa}, for the more special case of a translation-invariant system. The generalization of these bounds to general Hamiltonians only requires replacing Fourier transformation by the spectral theorem. The resulting bounds hold for any self-adjoint Hamiltonian $\Hchli$. The proof is given in Appendix \ref{detapp}.  

\subsection*{Acknowledgements}
We have benefitted from discussions with Marcello Porta at an early stage of this project. WDR also acknowledges the support of the Flemish Research Fund FWO under grant G076216N. 
\neu{The research of MS was supported by DFG Collaborative Research Centre {\sl SFB 1225 (ISOQUANT)}.}

\section{Setup and main result}

\subsection{The CAR algebra} 

Let $\Lambda$ be a finite set, $\scH = \ell^2 (\Lambda)$ and $\beta > 0$. Because $\scH$ is finite-dimensional, so is fermionic Fock space $\scF_\Lambda = \bigoplus_{n\ge 0} \bigwedge^n \scH$. Consequently, all linear operators on $\scF_\Lambda$ are bounded. We use the raising and lowering (creation and annihilation) operators $c^\pm_x$, satisfying the standard CAR
\beq\label{cliff}
\forall x,x' \in \Lambda \forall s,s' \in \{-1,1\}: \quad
c^s_x c^{-s'}_{x'} + c^{-s'}_{x'}c^s_x  = \delta_{x,x'} \delta_{s,s'}
\eeq
In the standard setup of \cite{BratRob}, the operators $c^\pm_x$ are associated to the orthonormal basis of $\scH$ given by the normalized eigenfunctions of the position operator, i.e.\ $c^\pm_x = c^\pm(f_x)$ where $f_x (y) = \delta_{x,y}$. 

{The unit anticommutator of (\ref{cliff}) is natural for a fermion system on a unit lattice. For a $d$-dimensional lattice with mesh size $a$, it would get scaled by a prefactor $a^{-d}$. This factor modifies the determinant bound $\delta$ (discussed below), and (as previous results), our bounds are not uniform in $a$ in the continuum limit $a\to 0$.}

\subsection{Operators and their normal-ordered form}

Let $\intA$ be a sequence of functions $\intA=(a_{\bar m, m})_{(\bar m,m) \in \bN_0^2}$, where for all $\bar m$ and $m$, 
\beq
a_{\bar m, m}: \Lambda^{\bar m} \times \Lambda^m \to \bC, 
\quad 
(\bar x_1, ... , \bar x_{\bar m}; x_1, ... x_m) 
\mapsto a_{\bar m, m} (\bar x_1, ... , \bar x_{\bar m}; x_1, ... x_m)\eeq
is antisymmetric: for any permutations $\pi$ on $\bN_{\bar m}$ and $\sigma$ on $\bN_m$,
\beq\label{antisy}
a_{\bar m, m} (\bar x_{\pi(1)} ,..., \bar x_{\pi(\bar m)}; 
x_{\sigma(1)}, ..., x_{\sigma(m)})
=
\veps_\pi\, \veps_\sigma\; 
a_{\bar m, m} (\bar x_1 , ..., \bar x_{\bar m}; 
x_{1}, ..., x_{m}) \; , 
\eeq
where $\veps_\pi$ denotes the sign of the permutation $\pi$.
To save space, we use the notations $\ul{x} = (x_1, \ldots, x_m)$
and $\ul{\bar x} = (\bar x_1, \ldots, \bar x_{\bar m})$.

Because $\scF=\scF_\Lambda$ is finite-dimensional, 
every linear operator $A$ on $\scF$ can be written as a polynomial in the $c^+$ and $c^-$, hence, using (\ref{cliff}), in the form
\beq\label{normA}
A
=
\scN (\intA)
=
\sum_{\bar m, m \ge 0} 
\sfrac{1}{\bar m!}
\sum_{\ul{\bar x} \in \Lambda^{\bar m}}
\sfrac{1}{ m!}\sum_{\ul{x} \in \Lambda^m}
a_{\bar m; m} (\ul{\bar x}, \ul{x})\; 
\pli_{\bar n=1}^{\bar m} c^+_{\bar x_{\bar n}}
\pli_{n=1}^{m} c^-_{x_{n}} \; .
\eeq
We shall call this the {\em normal ordered form} of $A$. 
For any $A$, the associated sequence $\intA$ is unique because of the antisymmetry condition (\ref{antisy}), so the normal ordered form is unique and can be regarded as a normal form. 

To make contact with notations used in spin systems, we briefly describe a different convention for writing $A$. 
Take any ordering of the finite set $\Lambda$. 
By nilpotency of the $c^\pm$, only injective sequences $\ul{x}$ and $\ul{\bar x}$ contribute. Denoting $M = \{ \bar x_1, \ldots, \bar x_{\bar m}\}$ and $N = \{ x_1, \ldots, x_{m}\}$, and setting 
\beq
w_{M,N} = a_{\bar m, m} (\bar x_{\pi(1)} ,..., \bar x_{\pi(\bar m)}; 
x_{\sigma(1)}, ..., x_{\sigma(m)})
\eeq
where the permutations $\pi$ and $\sigma$ are chosen such that 
$\bar x_{\pi(1)} ,..., \bar x_{\pi(\bar m)}$ is ascending in that order, and $x_{\sigma(1)}, ..., x_{\sigma(m)}$ is ascending as well, the expression for $A$ becomes
\beq\label{HIset}
A
=
\sum_{M,N \subset \Lambda} w_{M,N} 
\pli_{\alpha \in M} c^+_\alpha
\pli_{\beta \in N} c^-_\beta .
\eeq
We choose the convention (\ref{normA}) for writing $A$ because it does not require fixing any ordering of $\Lambda$, and will be more convenient when writing out the perturbation expansion later.

\subsection{The Hamiltonian}

The one-particle Hamiltonian $\Hchli$ is an operator on $\scH$, hence simply given by a matrix of  `hopping amplitudes' $h_0 (x,x')$ from $x'$ to $x$, which give $\Lambda$ the structure of a weighted directed graph (the directed edge $(x',x)$ has weight $h_0 (x,x')$). We always assume 
\beq
h_0 (x',x) = \overline{h_0 (x,x')}, 
\eeq
that is, $\Hchli$ is self-adjoint. 
We have motivated the matrix elements of $\Hchli$ as hopping amplitudes, {but $h_0 (x,x)$ can be nonzero. In particular,} $\Hchli$ may also include a chemical potential term $-\mu \delta_{x,x'}$ with $\mu \in \bR$. Moreover, the index $x$ can contain spin and other internal labels, besides the position.  A natural example would be to take $\Lambda= \mathbb{Z}^d_L \times \mathbb{S}$ where $\mathbb{Z}_L= \mathbb{Z}/{L\mathbb{Z}}$ can be viewed as discrete torus with $L$ sites and $\mathbb{S}=\{-s,\ldots, s-1, s\}$ corresponds to the $2S+1$ spin states of a spin $S$-system, with $2S \in \mathbb{Z}$. 

The second quantization of $\Hchli$,
\beq\label{H0def}
H_0 
=
(c^+, \Hchli \, c^-)_\Lambda
=
\sum_{x,x' \in \Lambda} 
h_0 (x,x') c^+_x c^-_{x'}
\eeq
describes independent fermions. The Hamiltonian for the system of interacting fermions is $H=H_0 + H_I$, where the interaction Hamiltonian $H_I$ of the fermions can contain general multibody interaction terms. In other words, it is a linear operator on $\scF_\Lambda$, given by a sequence of functions 
$\intV=(v_{\bar m, m})_{(\bar m,m) \in \bN_0^2}$ that are antisummetric in the sense of (\ref{antisy}).
We require $v_{0,0} = 0$. 
The interaction $\intV$ defines a second-quantized interaction Hamiltonian 
\beq\label{HIdef}
H_I 
=
\scN (\intV) \; .
\eeq
The condition $v_{0,0} =0$ restricts the sum over $\bar m $ and $m$ to pairs with $\bar m + m \ge 1$. This only removes the constant term, which drops out in all normalized expectation values. 

If $v_{\bar m, m} =0$ unless $\bar m + m$ is an even integer, the interaction $V$ is called even.
The condition that $H_I$ be self-adjoint puts an according one on
$V$.  

For simplicity of presentation, we will restrict to even $H_I$ in this paper, but this condition can easily be dropped, without any change in the results about exponential decay.
{\em Note that we do not require self-adjointness of $H_I$.}

\subsection{Norms}
The essential conditions on $\intV$, specified below, are (1)  `short-rangedness' in the form of a certain summability condition that is uniform in $|\Lambda|$, defined by a norm, and (2) `weakness' in the form of smallness of that norm. 
We define this norm for linear operators $A$ in terms of their normal ordered form, hence in terms of $\intA$, as follows. Let 
\beq
\abs{a_{\bar m; m}}_{1,\infty}^{(1)}
=
\sup_{\bar x_1} 
\sum_{\ul{\bar x}' \in \Lambda^{\bar m-1}}
\sum_{\ul{x} \in \Lambda^m}
\abs{a_{\bar m; m} (\bar x_1,\ul{\bar x}'; \ul{x})}
\eeq
and
\beq
\abs{a_{\bar m; m}}_{1,\infty}^{(2)}
=
\sup_{x_1} 
\sum_{\ul{\bar x} \in \Lambda^{\bar m}}
\sum_{\ul{x}' \in \Lambda^{m-1}}
\abs{a_{\bar m; m} (\ul{\bar x}; x_1,\ul{x}')}
\eeq
and set 
\beq
\abs{a_{\bar m; m}}_{1,\infty}
=
\max \left\{ 
\abs{a_{\bar m; m}}_{1,\infty}^{(1)}\;,
\;
\abs{a_{\bar m; m}}_{1,\infty}^{(2)}
\right\} \; .
\eeq
Moreover, let 
\beq
\abs{a_{\bar m; m}}_{1}
=
\sum_{\ul{\bar x}' \in \Lambda^{\bar m}}
\sum_{\ul{x} \in \Lambda^m}
\abs{a_{\bar m; m} (\ul{\bar x}'; \ul{x})}  \; .
\eeq
For $h> 0$ set 
\beq\label{normdef}
\norm{A}_h
=
\norm{\intA}_h
=
\sum_{\bar m, m \ge 0} 
\sfrac{1}{\bar m! m!}\;
\abs{a_{\bar m; m}}_{1,\infty}\; 
h^{\bar m + m} 
\eeq
and
\beq\label{tnormdef}
\tnorm{A}_h
=
\tnorm{\intA}_h
=
\sum_{\bar m, m \ge 0} 
\sfrac{1}{\bar m! m!}\;
\abs{a_{\bar m; m}}_{1}\; 
h^{\bar m + m} \; .
\eeq
Note that, for typical many-body operators like $A=\sum_x c^+_x c^-_x$, the norm $\tnorm{\intA}_h$ grows linearly in $|\Lambda|$, whereas the norm $\norm{A}_h$ stays bounded. The latter could be called a \emph{local norm} as it, roughly speaking, measures the size of the local terms. {For an operator $A$ given by a sequence $\cA$ of translation-invariant functions, $\tnorm{A}_h = |\Lambda|\; \norm{A}_h$.} 

\subsection{The truncated correlation function}
Let $\beta >0$. For a linear operator $A$ on $\scF_\Lambda$
set
\beq
\langle A \rangle_{\beta H}
=
\frac{1}{Z_{\beta H}} \Tr_\scF \left(
\E^{-\beta H} \; A
\right)
\eeq
The partition function is $Z_{\beta H} =\Tr_\scF \E^{-\beta H}$. 
If $H$ is self-adjoint, $Z_{\beta H}> 0$ follows from the spectral theorem. Because $\scF_\Lambda$ is finite-dimensional and $H_0$ is self-adjoint,  $Z_{\beta H_0} > 0$, and so $Z_{\beta H} \ne 0$ follows for small enough $V$ by continuity, since $\Lambda$ is finite. Under these conditions, the expectation value $\langle A \rangle_{\beta H}$ is well-defined for all $A$. 
The truncated correlation function of $A$ and $B$ is
\beq\label{truncAB}
\langle A(\tau) ; B \rangle_{\beta H}
=
\langle A(\tau)  B \rangle_{\beta H}
-
\langle A \rangle_{\beta H}\;
\langle  B \rangle_{\beta H}
\eeq
where 
\beq
A(\tau)
=
\E^{\tau H} \; A \E^{-\tau H} .
\eeq
Note that by cyclicity of the trace, $\langle A(\tau) \rangle_{\beta H}
= \langle A \rangle_{\beta H}$ for all $\tau$. The parameter $\tau \in [0,\beta) $ is a Euclidian time variable.   
{For finite $\Lambda$ and $\beta$, the truncated correlations are analytic functions of $V$. In the way we have just shown this, the radius of analyticity is obviously strongly dependent on the volume $|\Lambda|$, and on $\beta$. Our {results} imply, however, that these functions are analytic uniformly, that is, in a disk the radius of which is independent of $\Lambda$ and $\beta$.} 

\subsection{Exponential decay}
Let 
\beq\label{scCdef}
\scC (\tau, E) 
=
\bbbone_{\tau \le 0} f_\beta (E) \; \E^{-\tau E}
-
\bbbone_{\tau > 0} f_\beta (-E) \; \E^{-\tau E} 
\eeq
where $f_\beta (E) = (1+\E^{\beta E})^{-1}$ is the Fermi function and $\bbbone_{S} =1$ if statement $S$ is true, and $0$ otherwise. 
$\scC (\tau , \Hchli)$ is defined by the spectral theorem. 
The {\em fermionic covariance of $\Hchli$} is
\beq\label{fermcov}
C(\tau,x;\tau',x') 
=
\scC(\tau-\tau',\Hchli)_{x,x'}\; .
\eeq
The natural domain for the time-variable $\tau$ is the torus $\bR / \beta \bZ$, which we parametrize by $[0,\beta)$ with periodic boundary condition. Expressions such as $\tau-\tau'$ should hence be understood modulo $\beta$. 
In the following, let $d(\tau,\tau')$ be a metric on $\bR / \beta \bZ$. 
For $\rho \ge 0$, define $\alpha = \alpha_\rho$ as 
\beq
\alpha _\rho = \max \{ \alpha^+_\rho, \alpha^-_\rho\}
\eeq
where 
\beq
\alpha^+_\rho
=
\sup_{\tau, x}
\sum_{x'} \int_{-\beta}^\beta \dd\tau' \; 
\abs{C (\tau,x;\tau',x')} \E^{\rho d(\tau,\tau')} 
\eeq
and 
\beq
\alpha^-_\rho
=
\sup_{\tau, x}
\sum_{x'} \int_{-\beta}^\beta \dd\tau' \; 
\abs{C (\tau',x';\tau,x)} \E^{\rho d(\tau,\tau')} 
\eeq
{Although this is not made explicit in the notation, $\alpha_\rho$ is a function of the one-particle Hamiltonian $\Hchli$, and it also depends on which metric for $\tau$ is chosen.}

\begin{satz}\label{expodecaythm}
Let $H=H_0 + H_I$, with {$H_I=V$ given by an interaction $\cV$}, as in $(\ref{HIdef})$, and $A$ and $B$ be linear operators on $\scF$.  Recall the definition of $\norm{ \; \cdot \; }_h$ in $(\ref{normdef})$. Let $\delta =2$ and $\rho >0$. If $\alpha = \alpha _\rho$ is finite then for all $\tau \in [0,\beta)$, {the truncated correlation function of $A$ and $B$} is an analytic function of $V$ {on the ball $\alpha \norm{V}_{1+\delta} < 1$,  and it} decays exponentially in $\tau$:
\beq\label{ABdecay1}
\abs{\langle A(\tau) ; B \rangle_{\beta H}}
\le
\tnorm{A}_{1+\delta} \;  \norm{B}_{1+\delta} \; 
\frac{2 \alpha}{1 - \alpha \norm{V}_{1+\delta}}\;  \E^{-\rho\, d(0,\tau)}  \; .
\eeq
Moreover, 
\beq\label{ABdecay2}
\abs{\langle A(\tau) ; B \rangle_{\beta H}- \langle A(\tau) ; B \rangle_{\beta H_0}}
\le
\tnorm{A}_{1+\delta} \;  \norm{B}_{1+\delta} \; 
\frac{2 \alpha^2  \norm{V}_{1+\delta}}{1 - \alpha \norm{V}_{1+\delta}}\;  \E^{-\rho\, d(0,\tau)}  \; .
\eeq
Variants of the two above inequalities, obtained by replacing $\alpha$ with $\tilde \alpha = \alpha \delta^{-2}$ and $1+\delta$ with $2 \delta$ on the right hand side, also hold. 
Moreover, the product $\tnorm{A}_{1+\delta} \;  \norm{B}_{1+\delta}$ on the right side of the above inequalities can also be replaced by 
$\norm{A}_{1+\delta} \;  \tnorm{B}_{1+\delta}$.
\end{satz}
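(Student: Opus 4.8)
The plan is to develop the truncated correlation $\langle A(\tau); B \rangle_{\beta H}$ as a convergent power series in the interaction $V$, obtained from a Dyson--Duhamel expansion of $\E^{-\beta H}$ around the quasi-free dynamics generated by $H_0$, and then to bound this series term by term using the fermionic determinant (Gram) bound of Appendix~\ref{detapp}. First I would write
$$
\E^{-\beta H} = \E^{-\beta H_0}\, T\exp\Big(-\int_0^\beta V(s)\,\dd s\Big),
\qquad V(s) = \E^{sH_0} V \E^{-sH_0},
$$
expand the time-ordered exponential, and apply Wick's theorem for the quasi-free state $\langle\cdot\rangle_{\beta H_0}$, whose two-point function is exactly the covariance $C$ of~(\ref{fermcov}). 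The truncation in (\ref{truncAB}), together with the cancellation of the partition-function denominator, removes the disconnected (vacuum) contributions, so that only diagrams in which the external operators $A$ (at time $\tau$) and $B$ (at time $0$) lie in a single connected component survive.

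The technical heart is the estimate on each order in $V$. Wick's theorem turns the sum over contractions at a given order into a signed sum over pairings, i.e.\ into a determinant of a matrix of covariance entries. The determinant bound of Appendix~\ref{detapp} controls this determinant by $\delta^{(\text{number of fields})}$ with $\delta = 2$, and this is precisely the step that prevents the naive factorial ($n!$) growth of the number of pairings and makes the radius of convergence uniform in $\beta$ and $|\Lambda|$. I expect this to be the main obstacle: it is where the antisymmetric structure of fermions must be exploited, and where the factorials $1/\bar m!\,m!$ in the normal-ordered form (\ref{normA}) and the $1/n!$ from the Dyson series have to be matched against the combinatorics of contractions. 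The per-field weight $h^{\bar m+m}$ built into $\norm{\cdot}_h$ and $\tnorm{\cdot}_h$ supplies one factor $h$ per leg, which pairs with the factor $\delta$ per leg from the determinant bound; this is the origin of the subscript $1+\delta$, and of the variant with $2\delta$ and $\tilde\alpha = \alpha\delta^{-2}$, which is merely an alternative way of splitting the same weight between the two legs of each line.

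With the per-order bound in hand, exponential decay and resummation are comparatively routine. To extract $\E^{-\rho d(0,\tau)}$ I would multiply the estimate by $\E^{\rho d(0,\tau)}$ and use the triangle inequality for the metric $d$ on $\bR/\beta\bZ$: along any covariance path joining the $A$-vertex at time $\tau$ to the $B$-vertex at time $0$, the distances sum to at least $d(0,\tau)$, so $\E^{\rho d(0,\tau)}$ is dominated by the product $\prod \E^{\rho d(\tau_i,\tau_{i+1})}$ over the lines of that path. Attaching the weight $\E^{\rho d}$ to every covariance line (harmless since $\E^{\rho d}\ge 1$) makes each integrated line bounded by $\alpha = \alpha_\rho$ through the definitions of $\alpha^\pm_\rho$. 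The spatial sums and time integrals then factorize over the connecting tree: each interaction vertex contributes a factor $\alpha\norm{V}_{1+\delta}$ (the $1,\infty$ structure of $\norm{\cdot}$ matching one fixed incoming leg to freely-summed outgoing legs), each external operator contributes $\tnorm{A}_{1+\delta}$ respectively $\norm{B}_{1+\delta}$, and the two orientations of the direct $A$--$B$ covariance contraction, governed respectively by $\alpha^+_\rho$ and $\alpha^-_\rho$, account for the prefactor $2\alpha$. Summing the geometric series $\sum_{n\ge 0}(\alpha\norm{V}_{1+\delta})^n$ yields both convergence (hence analyticity) on $\alpha\norm{V}_{1+\delta}<1$ and the factor $(1-\alpha\norm{V}_{1+\delta})^{-1}$ of (\ref{ABdecay1}). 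The difference estimate (\ref{ABdecay2}) follows because subtracting $\langle A(\tau);B\rangle_{\beta H_0}$ removes precisely the $n=0$ term, so the series starts at $n=1$ and carries an extra factor $\alpha\norm{V}_{1+\delta}$; and the $A\leftrightarrow B$ swap of the norms is obtained by anchoring the global $\abs{\cdot}_1$ sum at the other external operator.
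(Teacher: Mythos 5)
Your outline reproduces the skeleton of the paper's argument---order-by-order bounds controlled by the determinant estimate of Appendix~\ref{detapp}, exponential decay extracted via the triangle inequality along the path joining the two external vertices, and a final geometric series---but it has a genuine gap at exactly the point you yourself flag as ``the main obstacle'', and the gap is not one of bookkeeping. After the Dyson--Duhamel expansion and Wick's theorem, the quantity that is a determinant is the signed sum over \emph{all} contractions, disconnected ones included. The truncated correlation keeps only the \emph{connected} contractions, and that restricted signed sum is no longer the determinant of any matrix of covariances, so the bound $\abs{\det}\le\delta^{\bar\nu+\nu}$ simply does not apply to it. Conversely, if you keep the full determinant so as to use the bound, you can neither cancel the disconnected parts against the partition function nor isolate a chain of covariance lines running from $A$ to $B$ whose decay produces the $\alpha_\rho$ factors: individual entries cannot be pulled out of a determinant and estimated separately while leaving a smaller determinant behind. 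Your third paragraph tacitly assumes both structures at once---explicit tree lines, each contributing $\alpha\norm{V}_{1+\delta}$, \emph{and} a determinant bound on everything else---without providing the identity that makes them coexist. With a naive connected-graph expansion no such identity holds, and the number of connected pairings at order $p$ grows like $p!$, destroying precisely the uniformity in $\beta$ and $|\Lambda|$ that is the point of the theorem.

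The missing ingredient is the tree-determinant (interpolation) formula, Theorem~\ref{treedetthm} of the paper, derived there via the Grassmann functional integral and the interpolation formula of \cite{SaWi}: the truncated function is written as a sum over trees of products of \emph{explicit} covariances on the tree lines (these carry both the connectivity and the exponential decay) multiplied by a determinant whose entries involve the interpolated covariance $M\odot C$, where $M$ is a positive semidefinite matrix with unit diagonal. Because such an $M$ has its own Gram representation, the determinant bound of Appendix~\ref{detapp} still applies to the loop part, while the tree lines sit outside the determinant and can each be summed and integrated to give a factor $\alpha_\rho$. Some formula of this Battle--Brydges--Federbush type (or an equivalent cluster/interpolation identity, as in \cite{SaWi,PeSa}) is indispensable; once it is in place, the remainder of your outline---the per-order bound, the decay along the path, the geometric series, the observation that (\ref{ABdecay2}) starts at order one, and the $A\leftrightarrow B$ norm swap---goes through essentially as you describe.
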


In brief, if the fermionic covariance of $\Hchli$ has an exponential decay, then for small enough \emph{local} norm of $H_I$, the truncated expectation of any two operators $A$ and $B$ decays exponentially as well. 

If $\alpha $ and $\delta$ are uniform in $\beta $ and $|\Lambda|$, so is the bound given by this theorem. {We have not explained yet where $\delta$ comes from. It is the determinant constant of the fermionic covariance of $\Hchli$. For the situation we consider, it turns out that for any self-adjoint $\Hchli$, one may take $\delta =2$. This is proven in the Appendix. When considering models on lattices of mesh size $\veps$, $\delta $ will depend on $\veps$.}

\subsection{Persistence of the gap}
\label{gapssec}

Recall that {because $\Lambda$ is finite, $H$ is a (possibly non-Hermitian) operator on a finite-dimensional space}, so it has a Jordan normal form
\beq
H= \sum_{j \ge 0} E_j \; P_j+D_j,\qquad D_j=P_jD_j= P_jD_jP_j
\eeq
where $E_0, E_1, \ldots$ are the eigenvalues of $H$, ordered such that $\Re E_0 {<}  \Re E_1 {<} \ldots$, $P_j$ the associated spectral projections, and the $D_j$ are nilpotents, i.e.\ there is a smallest $k_j \in \mathbb{N}$ such that $D_j^{k_j}=0$.  We may assume that $P_0 \ne 1$ to exclude the uninteresting case that $H$ is a multiple of the identity.
Without loss of generality, we may also assume that $E_0 = 0$, since we can always shift $H \to H - E_0$ without changing $\langle A(\tau) ; B \rangle_{\beta H}$.  Also, we will sometimes say that the gap is $c$, meaning that $\Re(E_1-E_0)=c$.
\begin{satz}\label{gaptheo}
Let $\rho >0$. 
If $\alpha = \alpha _\rho$ is finite, independent of $\beta$,  and $\alpha \norm{V}_{1+\delta} < 1$, then $E_0$ is a simple eigenvalue of $H = H_0 + V$ and  $\Re E_1- \Re E_0 \ge \rho$; i.e.\ {there is an `energy gap' at least $\rho$. If $\alpha_\rho $ is independent of $|\Lambda|$, then the gap is uniform in $|\Lambda|$.} 
\end{satz}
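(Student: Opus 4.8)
The plan is to read off the spectrum of $H$, in particular the gap above $E_0$, from the long-time ($\tau\to\infty$) behaviour of the truncated correlation functions, exploiting that Theorem \ref{expodecaythm} already supplies the decay rate $\rho$ with a prefactor that is uniform in $\beta$ and $|\Lambda|$. Since $\scF$ is finite-dimensional, I would fix a biorthogonal system of right and left (generalized) eigenvectors adapted to the Jordan decomposition $H=\sum_j E_jP_j+D_j$, with $\langle l_i|r_j\rangle=\delta_{ij}$ and $\sum_j\ketbra{r_j}{l_j}=\bbbone$, and insert resolutions of the identity into
\[
\langle A(\tau);B\rangle_{\beta H}=\frac{1}{Z_{\beta H}}\Tr_\scF\!\big(\E^{-\beta H}\E^{\tau H}A\,\E^{-\tau H}B\big)-\langle A\rangle_{\beta H}\langle B\rangle_{\beta H}.
\]
Ignoring the nilpotent parts momentarily, this yields the exact representation $\langle A(\tau);B\rangle_{\beta H}=Z_{\beta H}^{-1}\sum_{j,k}\E^{-\beta E_j}\E^{\tau(E_j-E_k)}A_{jk}B_{kj}-\langle A\rangle_{\beta H}\langle B\rangle_{\beta H}$, with $A_{jk}=\langle l_j|A|r_k\rangle$ and $B_{kj}=\langle l_k|B|r_j\rangle$; the $D_j$ only dress this with polynomial-in-$\tau$ prefactors, since $\E^{\tau H}=\sum_j\E^{\tau E_j}\E^{\tau D_j}P_j$ and $\E^{\tau D_j}$ is a polynomial in $\tau$.

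Next I would take $\beta\to\infty$. Because $\Re E_0=\min\Re\sigma(H)$, the weights $\E^{-\beta E_j}/Z_{\beta H}$ concentrate on the sector $S_0$ of eigenvalues of minimal real part, and the limiting correlation becomes a finite sum $\sum_k p_k(\tau)\,\E^{-\tau(E_k-E_0)}$ with polynomials $p_k$. Terms with $k\in S_0$ carry $\Re(E_k-E_0)=0$, hence contribute a part that does \emph{not} decay: a nonzero constant, an undamped oscillation $\E^{\I\tau(E_0-E_k)}$, or a polynomially growing term if $D_0\neq0$. Since after the limit $d(0,\tau)=\tau$, the bound (\ref{ABdecay1}) forces $|\langle A(\tau);B\rangle|\le C\,\E^{-\rho\tau}\to0$ as $\tau\to\infty$, so such a non-decaying contribution is impossible \emph{for every} $A,B$. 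Choosing rank-one operators from the ground sector — e.g.\ $A=\ketbra{r_a}{l_b}$ and $B=\ketbra{r_b}{l_a}$ for two independent ground vectors $a\neq b$ — makes the offending constant equal to $1/|S_0|\neq0$; ruling this out forces $|S_0|=1$ and $D_0=0$, i.e.\ $E_0$ is a simple eigenvalue.

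With $E_0$ simple the limit collapses to $G(\tau):=\sum_{k\neq0}\E^{-\tau(E_k-E_0)}A_{0k}B_{k0}$. To read off the gap I would take the Laplace transform $\hat G(z)=\int_0^\infty\E^{z\tau}G(\tau)\,\dd\tau=\sum_{k\neq0}\frac{A_{0k}B_{k0}}{(E_k-E_0)-z}$: the decay $|G(\tau)|\le C\,\E^{-\rho\tau}$ shows $\hat G$ is analytic on $\{\Re z<\rho\}$, so no pole $z=E_k-E_0$ with $\Re(E_k-E_0)<\rho$ can have nonvanishing residue. Most directly, taking $A=\ketbra{r_0}{l_1}$ and $B=\ketbra{r_1}{l_0}$ gives $A_{0k}B_{k0}=\delta_{k1}$, hence $G(\tau)=\E^{-\tau(E_1-E_0)}$ exactly; comparing $\E^{-\tau\Re(E_1-E_0)}\le C\,\E^{-\rho\tau}$ and letting $\tau\to\infty$ yields $\Re E_1-\Re E_0\ge\rho$. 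Finiteness of $\tnorm{A}_{1+\delta}$ and $\norm{B}_{1+\delta}$ for these rank-one operators (automatic on the finite-dimensional $\scF$) is all that is needed to invoke Theorem \ref{expodecaythm}, and the $|\Lambda|$-uniformity of the gap is inherited from that of $\alpha_\rho$.

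The main obstacle I anticipate is not the algebra above but the control of the $\beta\to\infty$ limit in the \emph{non-self-adjoint} case: for complex $E_j$ the partition function $Z_{\beta H}=\sum_j\E^{-\beta E_j}$ can oscillate and become small, so the ratios $\E^{-\beta E_j}/Z_{\beta H}$ need not converge, and the generalized-eigenvector expansion must be handled with the Jordan blocks $D_j$ present. I would address this by working along subsequences $\beta_n\to\infty$ for which the phases of the minimal-real-part weights align, so that $\E^{\beta_n\Re E_0}Z_{\beta_n H}$ stays bounded away from $0$; this suffices both to produce the contradiction in the simplicity step and to evaluate the rate in the gap step. The $\beta$-uniformity of the prefactor in (\ref{ABdecay1}) is precisely what allows the exponential decay to survive the limit.
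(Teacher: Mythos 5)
Your proposal is correct, and its gap\emph{-extraction} step coincides with the paper's: Lemma \ref{spectrallemma} proves $\Re E_1\ge\rho$ by choosing $A,B$ so that $\Tr\left(P_0A\E^{-\tau H}(1-P_0)B\right)=\E^{-\tau E_j}$, which is exactly your rank-one choice $A=\ketbra{r_0}{l_1}$, $B=\ketbra{r_1}{l_0}$. Where you genuinely depart from the paper is the proof of \emph{simplicity} of $E_0$. The paper never extracts simplicity from the decay bound: its Lemmas \ref{lem: rep of gap} and \ref{spectrallemma} \emph{assume} a simple eigenvalue and a positive gap, and simplicity is then obtained by a bootstrap in the coupling $H_g=H_0+gV$, $g\in[0,1]$: the free ground state is simple; if $E_0^g$ is simple then Theorem \ref{expodecaythm} gives the gap $\ge\rho$ at $g$; the finite-dimensional resolvent bound \eqref{eq: apriori resolvent} then yields a $\Lambda$-dependent but $g$-independent step $\epsilon$ over which simplicity persists; iterate to $g=1$. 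You instead prove simplicity directly at the interacting $H$, testing \eqref{ABdecay1} against rank-one operators $\ketbra{r_a}{l_b}$, $\ketbra{r_b}{l_a}$ built from a hypothetically degenerate or defective minimal-real-part sector; biorthogonality kills the disconnected part, and the truncated correlation has modulus exactly $1/|Z_{\beta H}|$, independent of $\tau$, contradicting uniform decay. This buys something real: no resolvent machinery from \cite{gil2003operator}, no interpolation in $g$, no need for the free ground state as an anchor; the price is Jordan-block and phase bookkeeping that the paper's standing hypothesis of ``simple and gapped'' renders trivial. Your anticipated obstacle is, moreover, milder than you fear: since nilpotents are traceless, $Z_{\beta H}=\sum_j\Tr(P_j)\,\E^{-\beta E_j}$, so after the shift $\Re E_0=0$ one has $|Z_{\beta H}|\le\dim\scF$ for \emph{all} $\beta$; hence your test correlations are bounded below by $1/\dim\scF$ whenever defined, and you only need $Z_{\beta_nH}\ne0$ along some $\beta_n\to\infty$ (automatic, the zeros of this exponential sum being isolated), while $Z_{\beta_nH}\to0$ would only strengthen the contradiction, so no phase alignment is actually required. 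Two small corrections for the write-up: the constant you call $1/|S_0|$ is really $1/|Z_{\beta H}|$ (its exact value is irrelevant, only a $\tau$-independent lower bound matters), and the order of limits should be: fix $\tau$, let $\beta_n\to\infty$ (so that $d(0,\tau)=\tau$ once $\beta_n>2\tau$), and only then let $\tau\to\infty$.
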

The proof uses a few simple lemmas. 
\begin{lemma}\label{lem: rep of gap}
Assume that $0$ is a simple eigenvalue of $H$ and that there is a positive gap, then 
$$
{\langle A(\tau) ; B\rangle_{\infty}} =  \Tr (P_0 A \E^{-\tau H} (1-P_0) B P_0)
$$
\end{lemma}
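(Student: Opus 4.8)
The plan is to read $\langle A(\tau);B\rangle_\infty$ as the zero--temperature limit $\lim_{\beta\to\infty}\langle A(\tau);B\rangle_{\beta H}$ and to evaluate that limit directly from the Jordan decomposition $H=\sum_{j\ge0}E_jP_j+D_j$. The hypotheses enter as follows. Since $E_0=0$ is simple, the corresponding Jordan block is one--dimensional, so $D_0=0$ and $P_0$ has rank one; in particular $HP_0=P_0H=0$, whence $\E^{-sH}P_0=P_0$ for every $s$. The positive gap means $\Re E_j\ge\Re E_1>0$ for all $j\ge1$, so the contributions of these blocks will be exponentially suppressed at large $\beta$.

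First I would compute the $\beta\to\infty$ asymptotics of the three building blocks $Z_{\beta H}$, $\langle A(\tau)B\rangle_{\beta H}$ and the one--point functions. Writing $\E^{-sH}=\sum_j\E^{-sE_j}\E^{-sD_j}P_j$ and using that $\Tr(P_jD_j^k)=0$ for $k\ge1$ (the trace of a nilpotent vanishes), one gets $Z_{\beta H}=\sum_j(\Tr P_j)\,\E^{-\beta E_j}$, which tends to $\Tr P_0=1$ because every $j\ge1$ term has modulus of order $\E^{-\beta\Re E_j}$. For the same reason $\E^{-(\beta-\tau)H}\to P_0$ in operator norm as $\beta\to\infty$ at fixed $\tau$: the only surviving Jordan block is the $j=0$ one, and in every other block the exponential decay $\E^{-(\beta-\tau)\Re E_j}$ dominates the polynomial--in--$\beta$ factors coming from $\E^{-(\beta-\tau)D_j}$. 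Rewriting $\E^{-\beta H}\E^{\tau H}A\E^{-\tau H}B=\E^{-(\beta-\tau)H}A\E^{-\tau H}B$ and dividing by $Z_{\beta H}$ then yields, in the limit, $\langle A(\tau)B\rangle_\infty=\Tr(P_0A\E^{-\tau H}B)$, together with $\langle A\rangle_\infty=\Tr(P_0A)$ and $\langle B\rangle_\infty=\Tr(P_0B)$. Hence $\langle A(\tau);B\rangle_\infty=\Tr(P_0A\E^{-\tau H}B)-\Tr(P_0A)\,\Tr(P_0B)$.

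It then remains to match this with the claimed expression by a short algebraic computation. Expanding the right--hand side, $\Tr(P_0A\E^{-\tau H}(1-P_0)BP_0)=\Tr(P_0A\E^{-\tau H}BP_0)-\Tr(P_0A\E^{-\tau H}P_0BP_0)$. In the first term, cyclicity of the trace together with $P_0^2=P_0$ gives $\Tr(P_0A\E^{-\tau H}B)$, reproducing the connected term. In the second term I would use $\E^{-\tau H}P_0=P_0$ to reduce it to $\Tr(P_0AP_0BP_0)$, and then the rank--one identity $P_0XP_0=\Tr(P_0X)\,P_0$, valid for any idempotent of rank one (write $P_0=\ketbra{\psi}{\phi}$ with $\bracket{\phi}{\psi}=1$), to obtain $\Tr(P_0A)\,\Tr(P_0B)$. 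Subtracting the two terms gives exactly $\langle A(\tau);B\rangle_\infty$.

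The point requiring care is that $H$ need not be self--adjoint, so $P_0$ is an oblique (Riesz) projection and one must argue through the Jordan form rather than an orthogonal spectral resolution; this is precisely where both hypotheses are used. Simplicity guarantees $D_0=0$ and rank one of $P_0$, so that both $\E^{-sH}P_0=P_0$ and the final identity $P_0XP_0=\Tr(P_0X)P_0$ hold, while the gap guarantees the existence of the $\beta\to\infty$ limit and that it is governed solely by $P_0$. The remaining ingredients—vanishing of traces of nilpotents and the dominance of exponential over polynomial factors—are routine, and I do not expect them to pose any genuine difficulty.
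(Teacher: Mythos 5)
Your proof is correct and follows essentially the same route as the paper's: both pass to the $\beta\to\infty$ limit via the Jordan normal form (gap kills all $j\ge 1$ blocks, nilpotents only contribute polynomial factors), giving $\langle A(\tau)B\rangle_\infty=\Tr(P_0A\E^{-\tau H}B)$ and $Z_{\beta H}\to\Tr P_0$, and then both split off the disconnected piece using $\E^{-\tau H}P_0=P_0$ and the rank-one identity $\Tr(P_0AP_0B)=\langle A\rangle_\infty\langle B\rangle_\infty$. Your write-up merely supplies a few details the paper leaves implicit (tracelessness of nilpotents, the oblique-projection caveat for non-self-adjoint $H$), which is fine.
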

\begin{proof}
\neu{By the Jordan normal form, 
\beq
\E^{-\beta H}
=
\sum_{j \ge 0} \E^{-\beta E_j} \; P_j \E^{-\beta D_j} 
\gtoas{\beta\to \infty}
P_0 \; .
\eeq 
where we used $\Re E_1 \ge {\rho}$ (positive gap) and the fact that $\E^{-\beta D_j}$ is polynomial in $\beta$ (nilpotency). Similarly, $Z_{\beta H} \gtoas{\beta\to \infty}
\Tr (P_0) $.}  Therefore   
$$
\langle A(\tau) B\rangle_{\infty}
=
\sfrac{1}{\Tr (P_0)} \;
\Tr \left(
P_0 A \E^{-\tau H} B 
\right)   =     \Tr \left(
P_0 A P_0 B 
\right)    +   \Tr \left(
P_0 A \E^{-\tau H}  (1-P_0) B 
\right)  
$$ 
As $P_0$ is a rank $1$ projection, $\Tr \left(
P_0 A P_0 B 
\right)=  \langle A \rangle_{\infty}\langle B \rangle_{\infty}$ and Lemma \ref{lem: rep of gap} follows by the definition of the truncated correlation. 
\end{proof}

\begin{lemma}\label{spectrallemma}
Assume that $0$ is a simple eigenvalue and there is a positive gap.  If, for all $A,B  \in \scL (\scF)$, the bound 
\begin{equation} \label{lem: statement of bound}
|  \Tr (P_0 A \E^{-\tau H} (1-P_0)B ) | \leq  C(A,B) \E^{-\tau \rho}, \qquad \tau \geq 0,
\end{equation}
holds {with $C(A,B)$ independent of $\tau$},  then the gap is at least $\rho$, i.e.\
$$
\Re E_1 \geq \rho \; .
$$
\end{lemma}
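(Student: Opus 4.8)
The plan is to exploit the freedom in the choice of $A$ and $B$ to isolate, inside the quantity $\Tr(P_0 A \E^{-\tau H}(1-P_0)B)$, a single slowest-decaying mode of $H$, and then simply read off its decay rate. Among the eigenvalues $E_j$ with $j\ge 1$ I would fix one realizing the minimal real part, which by the ordering convention is $E_1$, so that $\Re E_1 = \min\{\Re E_j : j\ge 1\}$, and I would choose a genuine eigenvector $\psi$ with $H\psi = E_1\psi$. Since such a $\psi$ lies in $\mathrm{Ran}(1-P_0)$, we have $(1-P_0)\psi = \psi$ and $\E^{-\tau H}\psi = \E^{-\tau E_1}\psi$, with no polynomial corrections from the nilpotents $D_j$ precisely because $\psi$ is an honest (not generalized) eigenvector.

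Writing the rank-one spectral projection of the simple eigenvalue $E_0$ as $P_0 = |\psi_0\rangle\langle\phi_0|$ (this is where simplicity of $E_0$ enters), I would then take $B = |\psi\rangle\langle\psi_0|$ and $A = |\phi_0\rangle\langle\psi|$. Using $(1-P_0)B = B$, the eigenvector relation above, and cyclicity of the trace, the left-hand side of $(\ref{lem: statement of bound})$ collapses to a pure exponential:
\beq
\Tr\bigl(P_0 A \E^{-\tau H}(1-P_0)B\bigr) = \E^{-\tau E_1}\,\langle\psi|\psi\rangle\,\langle\phi_0|\phi_0\rangle\,\langle\psi_0|\psi_0\rangle .
\eeq
The prefactor $c = \langle\psi|\psi\rangle\,\langle\phi_0|\phi_0\rangle\,\langle\psi_0|\psi_0\rangle$ is a product of squared norms of nonzero vectors, hence strictly positive, so $\bigl|\Tr(\cdots)\bigr| = c\,\E^{-\tau\Re E_1}$. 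Comparing with the assumed bound $(\ref{lem: statement of bound})$ yields $c\,\E^{-\tau\Re E_1} \le C(A,B)\,\E^{-\tau\rho}$, i.e.\ $\E^{\tau(\rho-\Re E_1)} \le C(A,B)/c$ for all $\tau\ge 0$; letting $\tau\to\infty$ forces $\rho - \Re E_1 \le 0$, which is exactly $\Re E_1 \ge \rho$.

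The only point that genuinely needs care is the non-self-adjointness of $H$: the Jordan projections $P_j$ are in general non-orthogonal and come with nilpotent parts $D_j$, so a naive estimate of $\E^{-\tau H}(1-P_0)$ would produce polynomial-in-$\tau$ factors through the $\E^{-\tau D_j}$. I avoid this entirely by probing with a true eigenvector, which renders the expression a clean exponential and makes the nilpotents invisible. I expect this to be the main conceptual step; everything else is bookkeeping.

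As an alternative (slightly heavier) route, one could observe that $(\ref{lem: statement of bound})$, holding for all $A,B$, bounds every matrix element of $\E^{-\tau H}(1-P_0)$ and hence, in finite dimension, its operator norm by $\mathrm{const}\cdot\E^{-\tau\rho}$; one then bounds this norm from below by the spectral radius of the restriction of $\E^{-\tau H}$ to $\mathrm{Ran}(1-P_0)$, which equals $\E^{-\tau\Re E_1}$, and compares exponential rates as $\tau\to\infty$. This gives the same conclusion but requires upgrading matrix-element bounds to a norm bound and invoking the spectral-radius lower bound, so I would present the eigenvector argument as the primary one.
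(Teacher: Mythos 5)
Your proof is correct and follows essentially the same route as the paper: the paper's argument likewise uses the rank-one structure of $P_0$ together with a genuine (non-generalized) eigenvector of some $E_j$, $j\neq 0$, to choose $A,B$ so that $\Tr\left(P_0 A \E^{-\tau H}(1-P_0)B\right)$ becomes a pure exponential $\E^{-\tau E_j}$, and then compares decay rates. Your version merely writes out the explicit choice $A=|\phi_0\rangle\langle\psi|$, $B=|\psi\rangle\langle\psi_0|$ and the resulting positive prefactor, which the paper leaves implicit.
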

\begin{proof}
Fix any index $j \neq 0$. Since there is at least one eigenvector associated to $E_j$, and $P_0$ has rank $1$, one can choose $A,B$ so that 
$\Tr \left[P_0 A \E^{-\tau H} (1-P_0) B \right] = \E^{-\tau E_j}$. Therefore, the hypothesis of the lemma implies that $\Re E_j \geq \rho$. 
\end{proof}

\begin{proof}[Proof of Theorem \ref{gaptheo}]
{For $g \in [0,1]$ let $H_g = H_0 + gV$, and let $E_j^g$ be the eigenvalues of $H_g$, again ordered such that $\Re E_j^g < \Re E_{j+1}^g $ for $j \ge 0$.} If at some value $g$, we knew that $E_0^g$ were simple, then 
Theorem \ref{expodecaythm} combined with the 
Lemmas \ref{lem: rep of gap} and \ref{spectrallemma} shows that the gap is at least $\rho$.  On the other hand, if $E_0^g$ is simple (and hence the gap is at least $\rho$), then by  perturbation theory the following bound is a sufficient condition for simplicity at $g+\Delta g$
\begin{equation}\label{eq: resolvent bound}
 \sup_{z: |z-E_0^g|=\rho/2} \norm{\frac{1}{z-H_g}}  < \frac{1}{ |\Delta g| \norm{ V}}
\end{equation} 
Hence, we have to bound a resolvent as in the left-hand side on the basis of spectral information. One has, in general, the bound (see e.g.\ Theorem 2.1.1 in \cite{gil2003operator})
\begin{equation} \label{eq: apriori resolvent}
\norm{\frac{1}{z-H}} \leq \sum_{k=0}^{d-1}  \frac{C(d,k)\norm{H}^k}{\mathrm{dist}(z,\sigma(H))^{k+1}},
\end{equation}
with $d$ the Hilbert space dimension and $C(d,k)$ combinatorial constants that do not depend on $H$. Note that for self-adjoint $H$, the bound is true already when dropping all terms with $k>0$. Back to our case: we have $d=2^{|\Lambda|}$ and we can easily find a bound $\norm{H+gV},\norm{V} \leq K$ such that $K$ depends on $\Lambda$  but not on $g$.   Therefore, we can find a ($\Lambda$-dependent) $\epsilon$ such that, for any $|\Delta g| \leq \epsilon$, the bound \eqref{eq: resolvent bound} is satisfied provided that $H_g$ has simple $E_0^g$ (and hence gap at least $\rho$).
Finally, we know that $E_0^0$ is simple by construction and by the above argument, we can prove simplicity for any $g \in [0,1]$ by iterating in steps of at most $\epsilon$ and using Theorem \ref{expodecaythm} at each iteration step.  This yields Theorem \ref{gaptheo}. 
\end{proof}

\subsection{Example: Integer Quantum Hall effect}

{This example is relevant for the quantum Hall effect and was discussed recently in \cite{BBDF}.
Let $\Lambda=\mathbb{T}^2_L = \bZ^2/L \bZ^2$ and let us label the lattice sites by  $x=(x_1,x_2)$ with $x_i =1,\ldots,L$. Moreover, let $\Hchli$ be self-adjoint and, for simplicity, finite-range, i.e.\ $h_0(x,x')=0$ whenever $|x-x'|_{\infty} >r$ for some fixed $r<L/2$.
The unperturbed Hamiltonian is given by (\ref{H0def}).} We do not specify the model further since that is not necessary for what follows, but a {possible} choice would be to take the Harper-Hofstadter model \cite{hofstadter1976energy}. A slight extension of the discussion below to non-cubic lattices would also allow to consider the Haldane model. 
We will thread the torus with magnetic fluxes, as follows.  We modify $h_0$ into 
\beq
h_0^{\phi}(x,x')=  h_0(x,x') \E^{\iu \phi(x,x')}
\eeq
with
\begin{equation} \label{eq: choice of phi}
\phi(x,x')= \phi_1 f(x_1,x'_1) +\phi_2 f(x_2,x'_2)
\end{equation}
where $\phi_{1,2}$ are both elements of $[0,2\pi)$ and 
\beq
f(y,y') =  \begin{cases} 1   &  \qquad \text{if} \quad  1 \leq y' < r,\quad  L-r+1 < y \leq L  \\   &  \qquad \text{if} \quad   1 \leq y < r,\quad L-r+1 <  y' \leq L   \\
0  & \qquad \text{otherwise}
 \end{cases}
\eeq
We can cast $\phi(x,x')$ as $\int_{\gamma_{x,x'}} d\mathbf{l} \cdot \mathbf{A}$, for a vector potential $\mathbf{A}$ defined on the edges of the lattice and $\gamma_{x,x'}$ a path between $x$ and $x'$.  The support of $\mathbf{A}$ is indicated in Figure \ref{fig: four squares}. We see that $\oint d \mathbf{l}\cdot \mathbf{A}=0$ for any closed loop that is contractible to a point. Hence there is no magnetic field piercing the torus. Yet, there are respective fluxes $(\phi_1,\phi_2)$  threading the torus. 
\begin{figure}
\centering
\includegraphics[width = 0.5\textwidth]{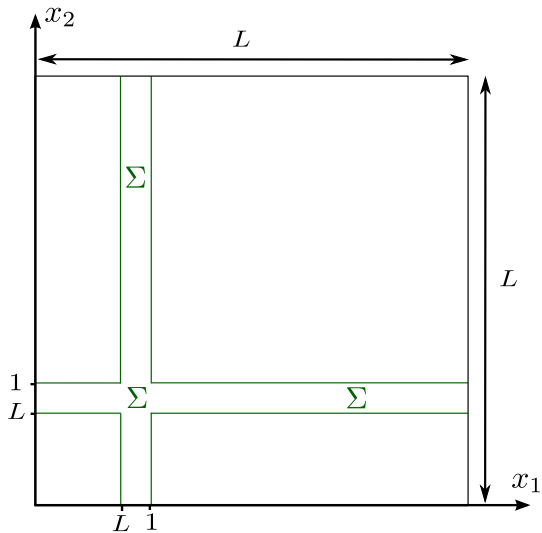}
\caption{The spatial support of the vector potential $\mathbf{A}$ is denoted by $\Sigma$. }
\label{fig: four squares}
\end{figure}
As it stands, the operators ${\Hchli}, \Hchli^{\phi}$ determined by the kernels $h_0$ and $h_{0}^{\phi}$ are not small (norm-)perturbations of each other, because  $|\phi|_{\infty}$ is not small unless the phases $\phi_1,\phi_2$ are small themselves. Let us consider a gauge transformation that remedies this. \neu{For a real-valued function $\nu$, let
$u_\nu$ the unitary operator that acts by multiplication by $ \E^{-\iu \nu(x)}$. Then 
\beq
u_\nu \Hchli^\phi u^{*}_\nu =   \Hchli^{\phi'},
\eeq
}with 
$$
\qquad \phi'(x,x')= \phi(x,x')+\nu(x)-\nu(x').
$$
We can now choose
\beq
\nu(x)=  \sum_{i=1,2} \phi_i (1-\tfrac{2x_i}{L}) \chi(1\leq x_i \leq L/2)
\eeq
This has the effect that, for the function $\phi$ chosen in \eqref{eq: choice of phi}, we obtain a transformed $\phi'$ that satisfies $|\phi'|_{\infty} \leq C/L$, \neu{and hence 
$ \norm{\Hchli- \Hchli^{\phi'}} \leq C/L$.  Therefore, gaps in the spectrum of $\Hchli$ remain open for $\Hchli^{\phi'}$, provided $L$ is large enough. }
The one-particle gauge transformation $u_\nu$ defines a many-body gauge transformation  
\beq
U_\nu = \E^{-\iu \sum_x \nu(x) n_x}
\eeq
(where $n_x =  c^+_x c_x^-$), and we have
\beq
U_\nu H^\phi_0 U^{*}_\nu =   H^{\phi'}_0,
\eeq
for the second-quantized operators $H^\phi_0,H^{\phi'}_0$ corresponding to $\Hchli^{\phi},\Hchli^{\phi'}$.  
The upshot is that, if $H_0$ has a spectral gap, then standard spectral perturbation theory for the one-body operators and the result of the present paper imply that this gap remains open when switching on a sufficiently weak interaction, uniformly for all fluxes $\phi$. The results of \cite{HM,BBDF} then imply that the Hall conductance is quantized. 

\section{Proof of Theorem \ref{expodecaythm}}
\subsection{The tree-determinant formula}

The proof of Theorem \ref{expodecaythm} uses an explicit formula for the truncated correlation function. It is similar to a quantum field theoretical Feynman graph summation where, however, a resummation over graphs is performed such that the sum reduces to tree graphs with associated determinants. The appearance of determinants is due to the fermionic antisymmetry; for bosons, one would have a permanent in its place. 

Let $R \subset \bN$ be finite. A set of ordered pairs 
\beq
\dirT
=
\{
(q,q') : q, q' \in R, q \ne q'
\}
\eeq
is a {\em directed tree} on $R$ if for any $q$ and $q'$, at most one of $(q,q')$ and $(q',q)$ is in $\dirT$, and if the corresponding set of unordered pairs
\beq
T
=
|\dirT|
=
\{
\{q,q'\} : (q,q') \in \dirT
\}
\eeq
is a tree with vertex set $R$. We denote the set of directed trees on $R$ by $\cdirT (R)$ and the set of trees on $R$ by $\scT(R)$. For the special case $R=\{ 1, \ldots , r\}$, we denote these sets by $\cdirT_r$ and $\scT_r$. 

If $\dirT \in \cdirT(R)$, then the sets 
\beq
\begin{split}
\Th_q 
&=
\{ (q',q): q' \in R, (q',q) \in \dirT \} 
\\
\bTh_q 
&=
\{ (q,q'): q' \in R, (q,q') \in \dirT \} 
\end{split}
\eeq
contain the ``ingoing'' and ``outgoing'' lines of $\dirT$ at vertex $q$. Let $\abs{\Th_q} = \th_q$ and $\abs{\bTh_q} = \bth_q$ be the graded incidence numbers.  
Then for all $q \in R$: $\th_q \ge 0$, $\bth_q \ge 0$, and 
$\th_q + \bth_q \ge 1$. 

The associated tree $T = |\dirT|$ has a sequence of incidence numbers $\ul{d} = (d_q (T))_{q \in R}$. They satisfy the tree relation 
$
\sum_{q\in R} d_q
=
2 (|R| -1). 
$
Evidently, if $T = |\dirT|$, then 
\beq\label{bththd}
\forall q \in R: \; \th_q+\bth_q = d_q \; .
\eeq

\begin{satz}\label{treedetthm}
Let $H=H_0 + \lambda H_I$. 
The truncated correlation function (\ref{truncAB}) is analytic in $\lambda$ for $|\lambda| < \lambda_0 (\beta, |\Lambda|, \norm{H_I})$. Its Taylor expansion around $\lambda =0$  can be rewritten as 
\beq\label{tauintformula}
\langle A(\tau) ; B \rangle_{\beta (H_0 + \lambda H_I)}
=
\sum_{p=0}^\infty \frac{\lambda^p}{p!}
\int_0^\beta \dd \tau_1 \ldots \int_0^\beta \dd \tau_p\;
G_{V,AB} (\tau_1, \ldots, \tau_p; \tau) \; ,
\eeq
where
\beq\label{treedetformula}
\begin{split}
& G_{V,AB} (\tau_1, \ldots, \tau_p; \tau)
\\
&=
\sum_{\dirT \in \cdirT_{p+2}}
\sum_{(\bar x_\ell, x_\ell)_{\ell \in \dirT}}
\pli_{\ell = (\bar q_\ell, q_\ell) \in \dirT}
\covC (\tau_{\bar q_\ell}, \bar x_\ell ; \tau_{q_\ell}, x_\ell) \;
\sum_{(\bar m_q,m_q)_{q\in \{1, \ldots, p+2\}}}
\sfrac{1}{\pli_{q=1}^{p+2} 
(\bar m_q - \bth_q)! (m_q -\th_q)!}
\\
&
\sum_{(\ul{\bar y}_q,\ul{y}_q)_{q\in \{1, \ldots, p+2\}}}
\pli_{q=1}^{p+2}
v_{\bar m_q,m_q}^{(q)} 
\left(
(\bar x_\ell)_{\ell \in \bTh_q}, \ul{\bar y}_q; (x_\ell)_{\ell \in \Th_q}, \ul{y}_q
\right)
\;
\sigma\; 
\left\langle 
\det{}_{\bar\nu,\nu} 
\left(
M \odot \covC
\right)
\right \rangle_+
\end{split}
\eeq
In this expression, $\tau_{p+1} = \tau$ and $\tau_{p+2} = 0$, 
$v_{\bar m_q,m_q}^{(q)} = v_{\bar m_q,m_q}$ for $q \in \{ 1, \ldots, p\}$, 
\beq
v_{\bar m_{p+1},m_{p+1}}^{(p+1)} = a_{\bar m_{p+1},m_{p+1}}
\qquad
v_{\bar m_{p+2},m_{p+2}}^{(p+2)} = b_{\bar m_{p+2},m_{p+2}} \; ,
\eeq
where $a$ and $b$ are the coefficient functions in the normal-ordered representation of $A$ and $B$ that is the direct analogue of $(\ref{HIdef})$. 
The $\bth_q$ and $\th_q$ are the graded incidence numbers of $\dirT$, and the summation over $\bar m_q$ and $m_q$ runs over the range $\bar m_q \ge \bth_q$ and $m_q \ge \th_q$. 
$\langle \; \cdot\;\rangle_+$ is the expectation over a $\dirT$-dependent probability measure of positive semidefinite $(p+2)\times (p+2)$ matrices $M$. The matrix  $\Gamma = M \odot \covC$ is defined as follows: let 
\beq\label{bnunudef}
\bar\nu = \sum_{q=1}^{p+2} (\bar m_q - \bth_q)
\quad
\mbox{ and }
\quad
\nu = \sum_{q=1}^{p+2} (m_q - \th_q)
\eeq
and let $\mu$ be the index $\mu=(q(\mu),r(\mu))$ with $q(\mu)=1,\ldots,p+2$ and $r(\mu) =1,\ldots, m_q-\theta_q$ and similarly for $\bar \mu=(q(\bar\mu),r(\bar\mu))$. Note that $\mu,\bar \mu$ index sets of cardinality $\nu,\bar\nu$. Let
then $\Gamma \in M_{\bar \nu, \nu} (\bC)$ with entries
\beq
\Gamma_{\bar\mu,\mu}
=
\left(
M 
\right)_{q(\bar\mu), q(\mu)} 
\;
\covC (\tau_{q(\bar \mu)}, \ul{\bar y}_{q(\bar \mu), r(\bar \mu)} ; \tau_{q(\mu)}, \ul{y}_{q(\mu), r(\mu)})
\eeq
where  $\ul{ y}_{q,r}$ is the $r$-th component of the vector $\ul{ y}_{q}$ (and similarly for $\ul{\bar y}_q$).

 The determinant $\det \Gamma$ is defined to be zero if $\bar \nu \ne \nu$.  

Finally, $\sigma $ is a combinatorial function of all summation variables which, however, takes only the values $-1$ and $1$. 
\end{satz}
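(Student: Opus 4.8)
The plan is to derive the formula in three stages: a Dyson expansion producing a power series in $\lambda$, a fermionic Wick theorem reducing each coefficient to free (Gaussian) expectations, hence to determinants of the covariance $\covC$, and finally a tree-graph resummation that converts the connected (truncated) part into the advertised sum over directed trees with a residual determinant. The analyticity claim for small $|\lambda|$ is immediate: since $\Lambda$ and $\beta$ are finite, $\scF_\Lambda$ is finite-dimensional and $H_I = \scN(\intV)$ is bounded, so $\E^{-\beta(H_0+\lambda H_I)}$ and $\E^{\tau(H_0+\lambda H_I)}A\,\E^{-\tau(H_0+\lambda H_I)}$ are entire in $\lambda$ and $Z_{\beta H}\neq 0$ near $\lambda=0$, giving a positive (but $\beta,|\Lambda|$-dependent) radius of convergence.

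First I would Duhamel-expand both $\E^{-\beta H}$ and the Euclidean time evolution in powers of $\lambda H_I$. Placing the interaction insertions at times $\tau_1,\ldots,\tau_p$ and the external operators $A$ at $\tau_{p+1}=\tau$ and $B$ at $\tau_{p+2}=0$, the $p$-th coefficient becomes a time-ordered \emph{free} expectation of a product of $p+2$ normal-ordered monomials, integrated over $\tau_i\in[0,\beta)$. The truncation \eqref{truncAB} removes precisely the disconnected contributions, i.e.\ those that factor into a piece containing $A$ and a piece containing $B$ with no contraction between them.

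Second, I would apply the fermionic Wick theorem to each free expectation. Because $\langle\,\cdot\,\rangle_{\beta H_0}$ is quasi-free, the expectation of a product of $c^\pm$ factorizes into a sum over complete pairings, each giving a product of two-point functions; collecting these into a determinant identifies the two-point function with the covariance $\covC(\tau,x;\tau',x')$ of \eqref{fermcov}. Each vertex $q$ supplies $\bar m_q$ creation and $m_q$ annihilation half-lines with weight $v^{(q)}_{\bar m_q,m_q}$ (equal to $v$, $a$, or $b$), and every contraction joins a creation half-line to an annihilation half-line, so it is naturally directed. This yields a sum of Feynman graphs on $\{1,\ldots,p+2\}$ with a global fermionic sign; the $1/(\bar m_q!\,m_q!)$ from \eqref{normA} and the $\lambda^p/p!$ from the Dyson series are carried along. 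Third --- the technical heart --- I would resum the connected graphs into trees by the Brydges--Kennedy (Abdesselam--Rivasseau) forest formula. Introducing an interpolation parameter $s_\ell\in[0,1]$ for each pair of vertices and interpolating the inter-vertex covariances, the forest formula expresses the truncated expectation as a sum over spanning trees $\dirT$ of $\{1,\ldots,p+2\}$: each tree line $\ell=(\bar q_\ell,q_\ell)$ carries an explicit factor $\covC(\tau_{\bar q_\ell},\bar x_\ell;\tau_{q_\ell},x_\ell)$, while the contractions not fixed by the tree are packaged into $\det(M\odot\covC)$ over the remaining $\bar\nu$ creation and $\nu$ annihilation half-lines, with $M=M(s)$ the positive-semidefinite Gram matrix produced by the interpolation and $\langle\,\cdot\,\rangle_+$ the resulting $\dirT$-dependent probability measure over the $s_\ell$. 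Orienting each tree line from its creation to its annihilation half-line makes $\dirT$ a directed tree, and the graded incidence numbers $\bth_q,\th_q$ count the half-lines consumed at $q$; the surviving $\bar m_q-\bth_q$ resp.\ $m_q-\th_q$ half-lines feed the determinant, explaining both the factorials $\prod_q(\bar m_q-\bth_q)!(m_q-\th_q)!$ and the constraints $\bar m_q\ge\bth_q$, $m_q\ge\th_q$. All reordering signs collapse into the single function $\sigma\in\{-1,1\}$.

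The main obstacle is this last step: verifying that the interpolation genuinely produces a determinant integrated against a \emph{probability} measure on positive-semidefinite matrices $M$ --- so that the determinant (Gram) bound of the Appendix, as in \cite{PeSa}, can later be applied --- and tracking the fermionic signs together with the symmetry and combinatorial factors so that they collapse exactly to $\sigma$ and to the stated product of factorials. Everything else (analyticity, the Dyson expansion, Wick's theorem) is routine; the sign bookkeeping and the positivity of $M$ are where the care is required.
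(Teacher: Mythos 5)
Your proposal is correct in substance, but it takes a genuinely different route from the paper's. You set up the expansion in operator language: a Duhamel/Dyson series for $\E^{-\beta H}$, the fermionic Wick theorem for the quasi-free state $\langle\,\cdot\,\rangle_{\beta H_0}$ (with the time-ordered two-point function $\covC$), a linked-cluster identification of the truncation with connectedness, and then the Brydges--Kennedy/Abdesselam--Rivasseau forest formula to resum connected graphs into trees with a residual determinant. The paper instead forms the generating function $Z_{\beta H}(r,s)=\Tr\bigl(\E^{-\beta H}(1+rA(\tau))(1+sB)\bigr)$, discretizes Euclidean time by a Lie--Trotter product, and rewrites $Z_N(r,s)$ as a Grassmann Gaussian integral whose covariance is exactly the fermionic covariance $\covC$ evaluated at discrete times; the tree-determinant formula then follows by invoking the interpolation formula (Theorem 3) of \cite{SaWi} and the analyticity theorem of \cite{PeSa}, with a final argument that the $N\to\infty$ limit does not change the estimates. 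The paper explicitly acknowledges your route as ``presumably'' possible but avoids it, and for a reason you should take seriously: the forest/interpolation machinery requires a commutative algebra in which covariance entries can be weakened independently, and the even Grassmann subalgebra supplies this for free, while simultaneously encoding Wick's theorem (Grassmann Gaussian integrals of monomials \emph{are} determinants) and the vacuum-graph cancellation through $\ln Z$. In your approach these points must be built by hand: the forest formula has to be applied to the Wick-expanded coefficients viewed as functions of inter-vertex covariance entries, the factorization property when interpolation parameters vanish has to be verified, and the sign function $\sigma$, the factorials, and the positive semidefiniteness of $M$ with its probability measure $\langle\,\cdot\,\rangle_+$ --- which you correctly identify as the technical heart --- amount to re-deriving in operator language precisely what \cite{SaWi} provides in the Grassmann setting. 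So both routes lead to the same formula; the paper's buys brevity by reduction to existing results, yours avoids the Trotter limit and the Grassmann formalism at the cost of redoing that combinatorial core, and it would require the same restriction to even $H_I$ (or extra ordering care) that the paper imposes.
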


\begin{figure}
\centering
\includegraphics[width = 0.75\textwidth]{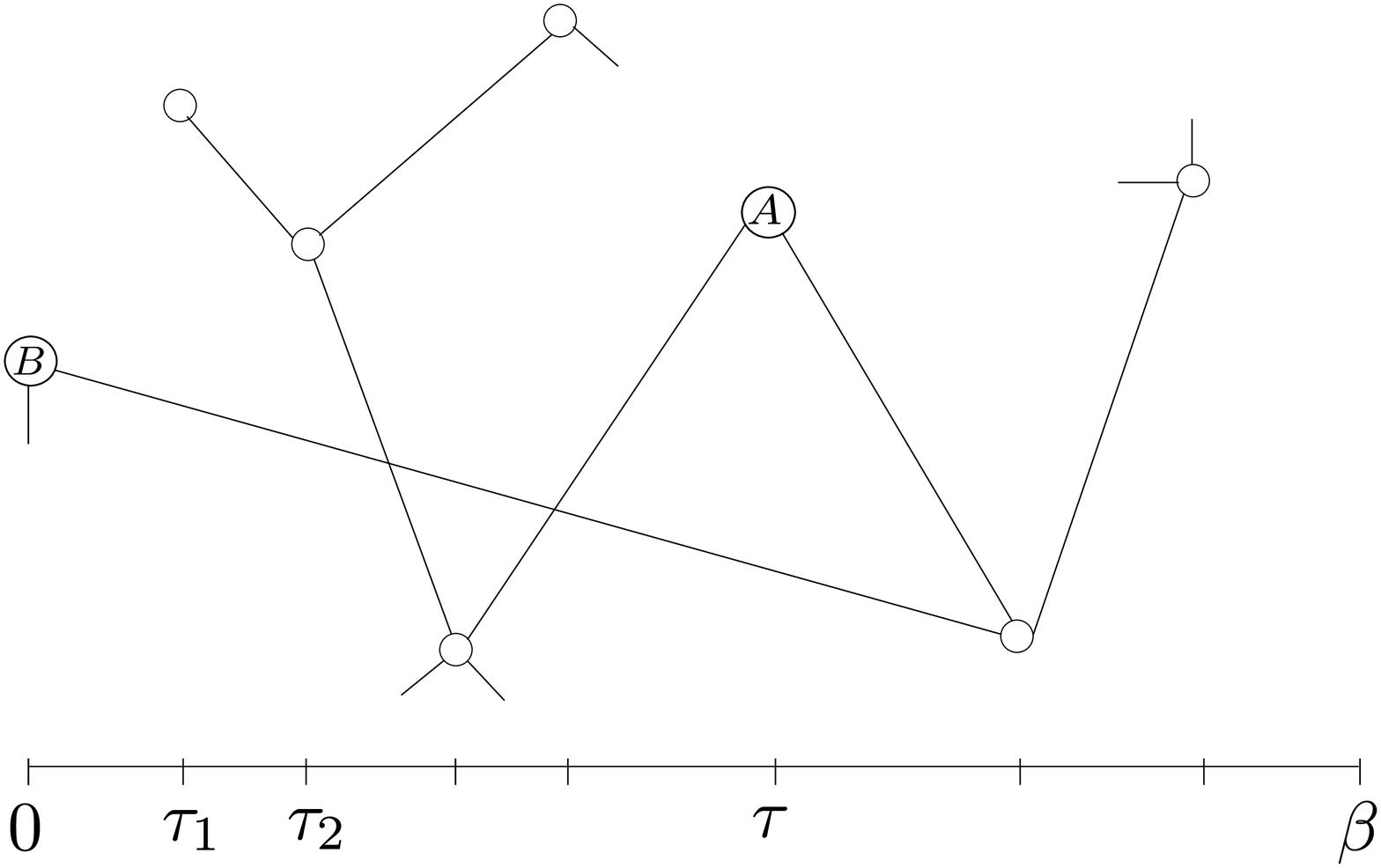}
\caption{An example of a $\dirT \in \cdirT_{p+2}$ with $p=6$ (we have drawn only the (undirected) tree $|T|$. To each vertex is associated a time $\tau_i$, with $\tau_{p+1}=\tau$ and $\tau_{p+2}=0$   and an interaction term $v_i$. The order of the interaction term (the number of field operators it contains) corresponds to the number of edges containing that vertex. The vertical direction schematically depicts the spatial location on $\Lambda$. {In the expansion, the locations of all vertices that are not labelled in this figure get summed over, corresponding to the sums over $(\bar x_\ell, x_\ell)_{\ell \in \dirT}$ in the tree-determinant-formula (\ref{treedetformula}), and the times $\tau_1$, $\tau_2, \ldots $ get integrated over, corresponding to the time integrals in (\ref{tauintformula})}.
}
\label{fig: tree1}
\end{figure}

\begin{figure}[!htb] 
\vspace{0.5cm}
\hspace{1cm}
 \begin{minipage}{.4\linewidth}
\centering
\includegraphics[width = \textwidth]{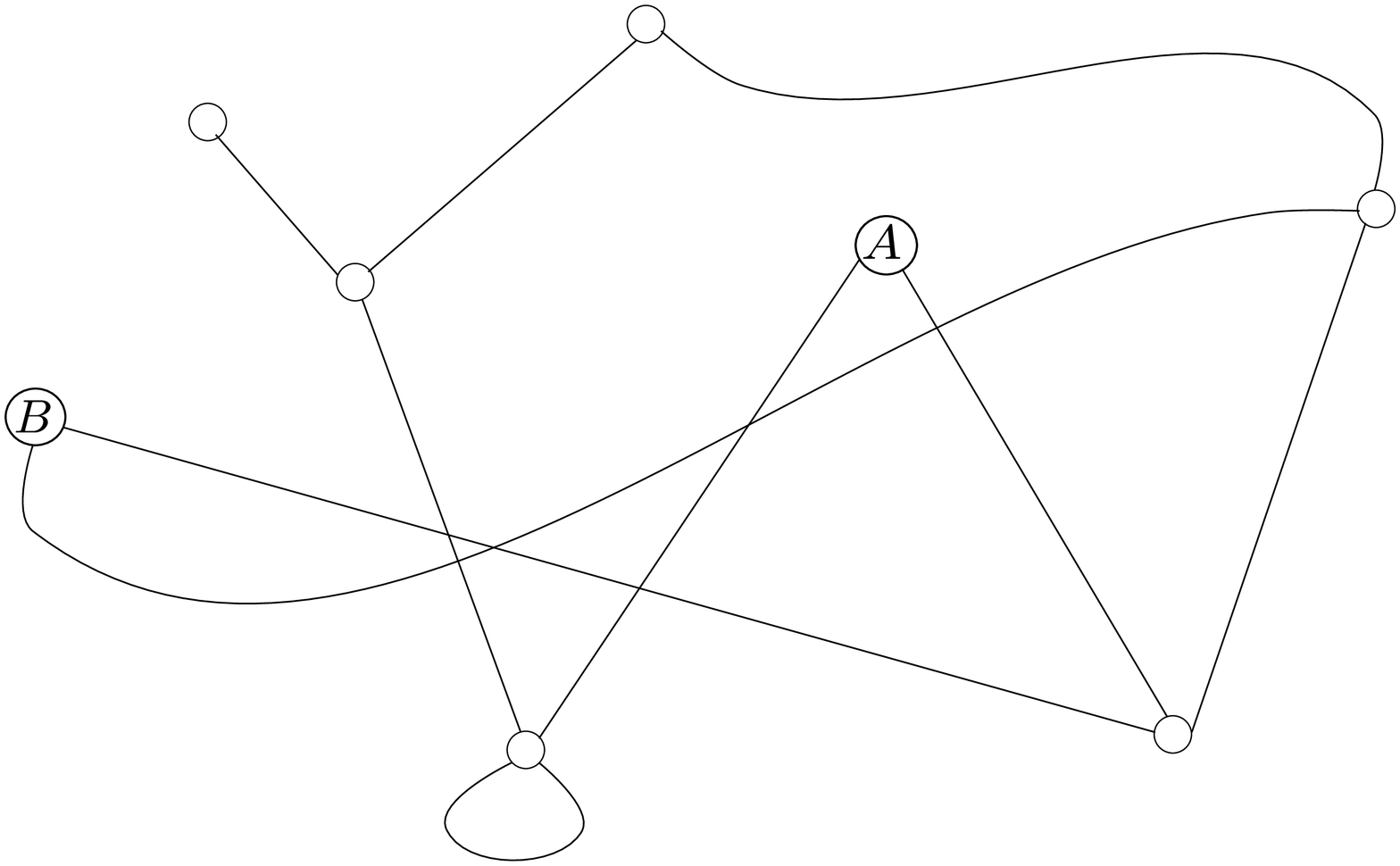}
 \end{minipage}
  \begin{minipage}{.4\linewidth}
\centering
\includegraphics[width = \textwidth]{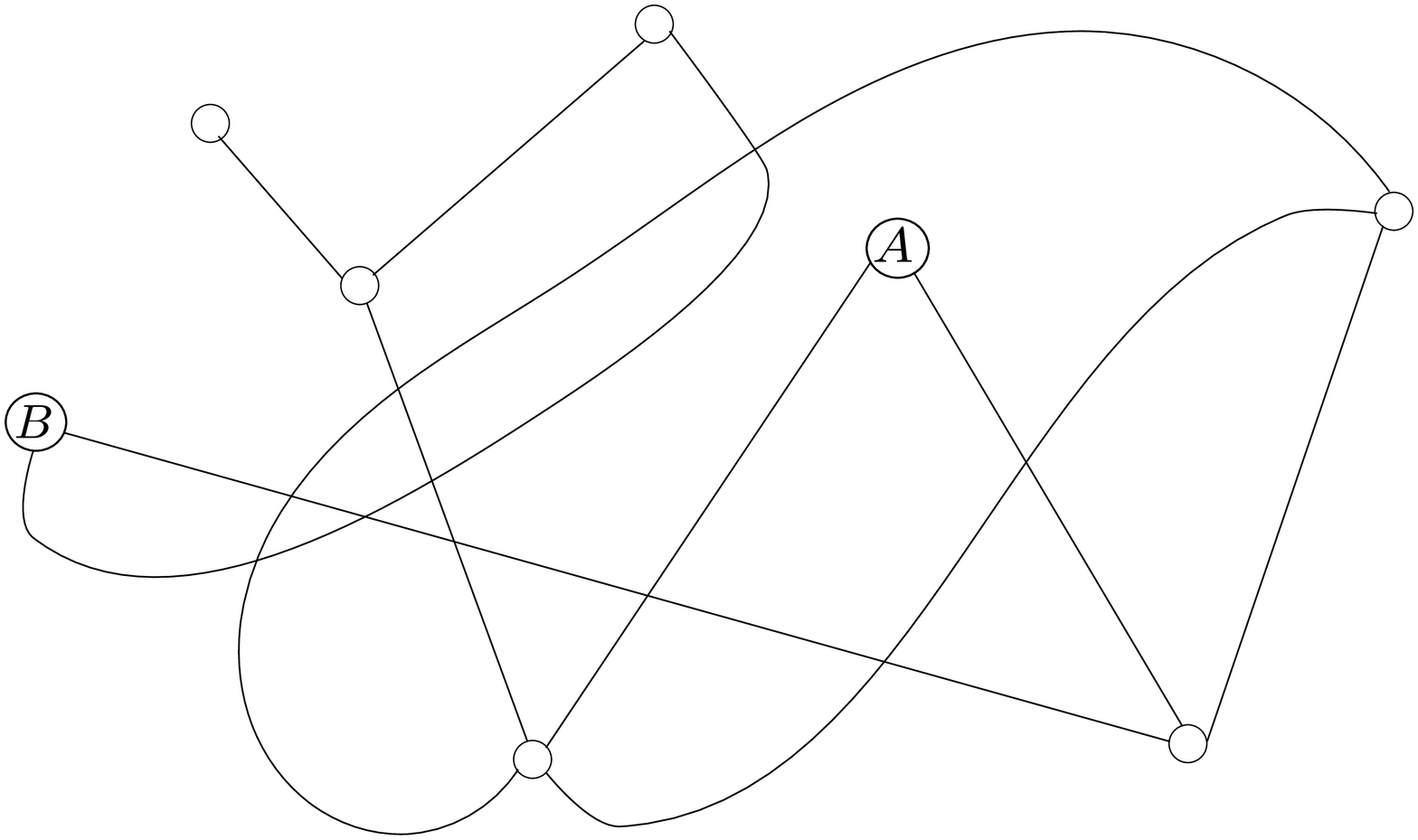}
 \end{minipage}
 \caption{Two examples of full (undirected) graphs associated to the above tree}\label{table rules}
\end{figure}

The proof of Theorem \ref{treedetthm} {is given} in Section \ref{peeksec}. It closely follows the proofs given in \cite{SaWi,Sa09}  (in fact, it is slightly simpler), so we will be brief about it and include only those details that make its structure clear.

{The stated analyticity in $\lambda$ (non-uniformly in $\beta$ and $|\Lambda|$) holds because for finite $\Lambda$, $\scF_\Lambda$ is finite-dimensional. This is briefly explained at the beginning of Subsection \ref{gen-fu}. The bounds given in Lemma \ref{Oplemma} then imply that the radius of convergence only depends on $\alpha_\rho$.}

Equation (\ref{treedetformula}) is not a short formula, but one should not be thrown off by the many summations involved:
the sums over $\bar m_q$, $m_q$, $(\bar x_\ell)_{\ell \in \bTh_q}$, $(x_\ell)_{\ell \in \Th_q}$, $\ul{\bar y}_q$ and $\ul{y}_q$ in (\ref{treedetformula}) are simply the summations appearing in the general representation of $H_I$ given by (\ref{HIdef}) and (\ref{normA}), only rearranged in terms of the directed tree. The important thing is that the product of covariances over tree lines connects the summations at each vertex and the $\tau_q$-integrations, so that a sufficient decay of $|\covC|$ implies exponential decay and uniform bounds. All the complications of the expectation value that go beyond the tree are in the determinant and in $\langle \; \cdot\;\rangle_+$.

This representation also has the standard Feynman graph interpretation \cite{msbook}, which is not required to do the proofs, but useful. Each factor $v^{(q)}_{\bar m_q,m_q}$ corresponds to a vertex with $\bar m_q$ outgoing and $m_q$ incoming lines. Each line in $\bTh_q$ joins $q$ to another vertex $q'$ by an outgoing line, and each line in $\Th_q$ joins some $q'$ to $q$ by an ingoing line. Because trees are connected, the Feynman graph is connected. The loops in the Feynman graphs that contribute to the truncated correlation function are created when the determinant is expanded in terms of permutations. 

The sign $\sigma$ will be irrelevant here because to get the bound in Theorem \ref{expodecaythm}, it suffices to take the absolute value inside all sums of (\ref{treedetformula}), after which it disappears. Similarly, details about $\langle \; \cdot\;\rangle_+$ will not be needed: since it is an average with respect to a probability measure, 
$\abs{\langle \det{}_{\bar\nu,\nu} M \odot \covC
\rangle_+}
\le 
\sup_M \abs{\det{}_{\bar\nu,\nu} M \odot \covC}$.
The essential point is that this determinant is bounded by const$^\nu$, uniformly in $M$.  

\subsection{Proof of exponential decay}
\label{proofsec}

We will prove the following bound for the term of order $p$:
\begin{lemma}\label{Oplemma}
Let $\rho > 0$ and 
\beq
\scI_p
=
\frac{1}{p!} \int_0^\beta \dd \tau_1 \ldots \int_0^\beta \dd \tau_p\;
\abs{G_{V,AB} (\tau_1, \ldots \tau_p; \tau)} \; \E^{\rho d(\tau,0)}
\eeq
Then 
\beq\label{jofrale}
\scI_p
\le
(2 \alpha_\rho)^{p+1} \; \norm{V}_{1+\de}^p \; 
\tnorm{A}_{1+\de} \; \norm{B}_{1+\de} \; .
\eeq
\end{lemma}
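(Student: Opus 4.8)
The plan is to bound the order-$p$ term directly from the tree-determinant formula (\ref{treedetformula}), taking absolute values inside every sum and integral. Since $\sigma\in\{-1,1\}$ it drops out, and since $\langle\,\cdot\,\rangle_+$ is an average over a probability measure of positive semidefinite matrices $M$, I may replace $\langle\det{}_{\bar\nu,\nu}(M\odot\covC)\rangle_+$ by $\sup_M\abs{\det{}_{\bar\nu,\nu}(M\odot\covC)}$. The first key input is then the determinant bound of Appendix \ref{detapp}: for self-adjoint $\Hchli$ the covariance $\covC$ admits a Gram representation with determinant constant $\delta=2$, so that $\sup_M\abs{\det{}_{\bar\nu,\nu}(M\odot\covC)}\le\delta^{\nu}$, pointwise in all time and position arguments and nonzero only when $\bar\nu=\nu$. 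This eliminates the loops (the entries of $\Gamma$ that carry the residual legs $\ul{\bar y}_q,\ul{y}_q$) and leaves only the product of tree covariances, the coefficients $v^{(q)}$, and the scalar $\delta^{\nu}$.

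First I would localise the exponential weight on the tree. Because $|\dirT|$ is a tree on $\{1,\dots,p+2\}$ with $\tau_{p+1}=\tau$ and $\tau_{p+2}=0$, the triangle inequality for the metric $d$ on $\bR/\beta\bZ$, applied along the unique path joining $p+1$ and $p+2$, gives $d(\tau,0)\le\sum_{\ell\in\dirT}d(\tau_{\bar q_\ell},\tau_{q_\ell})$, hence $\E^{\rho\,d(\tau,0)}\le\pli_{\ell\in\dirT}\E^{\rho\,d(\tau_{\bar q_\ell},\tau_{q_\ell})}$. I attach each factor to the covariance on the corresponding tree line. I then root the tree at the $A$-vertex $q=p+1$ and integrate the lines from the leaves inwards: integrating the covariance of a line, with its weight $\E^{\rho d}$ included, over the coordinates $(\tau',x')$ of its far endpoint produces exactly one factor $\alpha_\rho^{\pm}\le\alpha_\rho$ by the very definition of $\alpha_\rho^{\pm}$. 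As the tree has $p+1$ lines, this yields $\alpha_\rho^{\,p+1}$.

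Next I would bound the vertex and tree sums by the norms. Writing $\bar m_q=\bth_q+\bar k_q$ and $m_q=\th_q+k_q$, where $\bar k_q,k_q$ count the residual legs, the explicit factor $1/\bar k_q!\,k_q!$ together with the antisymmetry of $v^{(q)}$ collapses the ordered residual sums into $\abs{v^{(q)}_{\bar m_q,m_q}}_{1,\infty}$ with a single pinned leg (the one pointing towards the root). The determinant factor $\delta^{\nu}$ is distributed over the residual legs, while the number of ways of placing the tree-legs among all legs of the vertex supplies the remaining combinatorial weights; the elementary identity $\sum_{k}\binom{n}{k}\delta^{\,n-k}=(1+\delta)^{n}$ is what converts these per-leg factors into $(1+\delta)^{\bar m_q+m_q}$, so that summing over $\bar m_q,m_q$ reconstructs the local norm with $h=1+\delta$. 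Each of the $p$ interaction vertices then yields $\norm{V}_{1+\delta}$, the $B$-vertex yields $\norm{B}_{1+\delta}$, and the root $A$-vertex, whose coordinate is summed without pinning, yields $\tnorm{A}_{1+\delta}$. The sum over directed trees supplies the factor $2^{p+1}$, while the tree entropy is absorbed by the prefactor $1/p!$ of $\scI_p$ together with the per-vertex factorials (the tree-graph bound). Collecting everything gives (\ref{jofrale}); rooting at the $B$-vertex gives the $\norm{A}_{1+\delta}\tnorm{B}_{1+\delta}$ form, and distributing the covariance and determinant weights via $\sum_{k}\binom{n}{k}\delta^{k}\delta^{\,n-k}=(2\delta)^{n}$ instead gives the $\tilde\alpha=\alpha\delta^{-2}$, $2\delta$ variant.

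I expect the genuine obstacle to be precisely the combinatorial bookkeeping of the third step: verifying that the determinant constant $\delta$, the factorials $1/(\bar m_q-\bth_q)!(m_q-\th_q)!$, the binomial placement of tree-legs, and the sum over (directed) trees combine to give \emph{exactly} the local norm $\norm{\,\cdot\,}_{1+\delta}$ at each internal vertex and the clean prefactor $(2\alpha_\rho)^{p+1}$, with $1/p!$ absorbing the tree entropy so that no volume-dependent or super-exponential-in-$p$ constant survives. The analytic estimates — the determinant bound, the triangle inequality, and the single-line integration against $\alpha_\rho$ — are comparatively routine; it is the exact matching of the symmetry factors that must be carried out with care, following \cite{PeSa,SaWi,Sa09}.
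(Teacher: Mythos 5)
Your proposal is correct and follows essentially the same route as the paper's proof: absolute values inside the tree-determinant formula, the determinant bound from the appendix (note that Corollary \ref{cor: koro} gives $\delta^{\bar\nu+\nu}$, which is what your later per-leg accounting with $(1+\de)^{\bar m_q+m_q}$ actually uses, rather than $\delta^{\nu}$), the triangle inequality to spread $\E^{\rho\, d(\tau,0)}$ over tree lines, one factor $\alpha_\rho$ per line from a leaf-to-root summation, the norms $\tnorm{A}_{1+\de}$, $\norm{B}_{1+\de}$, $\norm{V}_{1+\de}$ at the respective vertices, and the Cayley/binomial bookkeeping producing $2^{p+1}$ and $h=1+\de$. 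The only differences are organizational: the paper attaches the exponential weight only to the lines of the path joining the $A$- and $B$-vertices and sums the branch trees (with $\alpha_0 \le \alpha_\rho$) before the path, whereas you weight every line and run a single recursion rooted at the $A$-vertex — both yield the identical bound (\ref{jofrale}), including the $\norm{A}_{1+\de}\tnorm{B}_{1+\de}$ and $\tilde\alpha = \alpha\delta^{-2}$ variants.
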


The estimates in Theorem \ref{expodecaythm} then follow by summation over $p$: summation over $p \ge 0$ gives (\ref{ABdecay1}). For (\ref{ABdecay2}), the summation starts at $p=1$, since the term of order zero is subtracted. 

The remaining part of this section contains the proof of Lemma \ref{Oplemma}. The procedure will be to take the absolute value inside all sums that are explicit in (\ref{treedetformula}), and to do the sum first over the variables $\bar x, x, \ul{\bar y}, \ul{y}$, then bound the sum over trees $\dirT$, and finally sum over $\bar m, m$. 

Let $\dirT \in \cdirT_{p+2}$ and $T = |\dirT|$ be the undirected tree associated to $\dirT$. There is a unique path $P$ from vertex $p+1$ to vertex $p+2$ over lines of $T$. If $r$ is the length of $P$ (which depends on $T$), we denote the vertices on $P$ by $\tilde q_0, \ldots , \tilde q_r$, with $\tilde q_0 = p+1$ and 
$\tilde q_r = p+2$, so that  $P=\{ \{\tilde q_{s-1}, \tilde q_s \} : s \in \{1, \ldots, r\}\}$. By the triangle inequality, 
\beq \label{eq: triangle times}
d(\tau,0)
\le
\sum_{s=1}^r d(\tau_{\tilde q_{s-1}}, \tau_{\tilde q_{s}})  \; .
\eeq
$T \setminus P$ is a forest of $r+1$ trees $\tilde T_s$ rooted at $\tilde q_s$. Let $\tV_s$ be the set of all vertices $q\ne \tilde q_s$ of $\tilde T_s$. ($\tV_s$ may be empty, if $\tilde T_s $ itself is empty.) 

Abbreviating $\covC_\ell = \covC (\tau_{\bar q_\ell}, \bar x_\ell ; \tau_{q_\ell}, x_\ell) $ and using \eqref{eq: triangle times}, we rearrange 
\beq
\E^{\rho d(\tau,0)} \;
\pli_{\ell = (q_\ell, q'_\ell) \in \dirT}
|\covC_\ell|  
\leq
\pli_{s=1}^r |\covC_{\ell_s}| \; \E^{\rho d (\tau_{\tilde q_{s-1}}, \tau_{\tilde q_s})} \;
\pli_{|\ell| \in \tilde T_s} |\covC_\ell| 
\eeq
where $|\ell |$ is the (undirected) edge corresponding to the directed edge $\ell$ and $\ell_s$ is the directed edge such that $|\ell_s|=\{ \tilde{q}_{s-1}, \tilde{q}_{s}\}$
and, similarly, 
\beq
\pli_{q=1}^{p+2}
\abs{v_{\bar m_q,m_q}^{(q)} }
=
\pli_{s=1}^r 
\abs{v_{\bar m_{\tilde q_s},m_{\tilde q_s}}^{(\tilde q_s)} }\;
\pli_{q \in \tV_s} 
\abs{v_{\bar m_q,m_q}^{(q)} }
\eeq
By Corollary \ref{cor: koro} (see Appendix), 
\beq
\abs{\left\langle 
\det{}_{\bar\nu,\nu} 
\left(
M \odot \covC
\right)
\right \rangle_+}
\le
\de^{\bar \nu + \nu} \; ,
\eeq
with $\bar\nu$ and $\nu$ given in (\ref{bnunudef}).
We now do the summations over $\bar x, x, \ul{\bar y}, \ul{y}$ and the $\tau_q$-integrals on each $\tilde T_s$, 
\beq
\Sigma_s (\tau_{\tilde q_s}, \tilde x_s)
=
\int \pli_{q \in \tV_s} \dd \tau_q \; \sum_{ {(\bar x_\ell, x_\ell)_{|\ell| \in \tilde T_s}\atop \tilde x_s \; {\rm fixed}}}
\sum_{(\ul{\bar y}_q, \ul{y}_q)_{q \in \tV_s}}
\pli_{|\ell| \in \tilde T_s} |\covC_\ell| 
\;
\pli_{q \in \tV_s} 
\abs{v_{\bar m_q,m_q}^{(q)} }
\eeq
The sum over $x$ excludes the one variable $x_\ell$ or $\bar x_\ell$ that also appears in  $v_{\bar m_{\tilde q_s},m_{\tilde q_s}}^{(\tilde q_s)}$, here denoted by $ \tilde x_s$. 
The sums in $\Sigma_s$ are done recursively, starting at the leaves of $\tilde T_s$ that are farthest away from the root $\tilde q_s$ (see, e.g.\ \cite{Br84,SaWi} for details).
Taking the supremum over this variable and over the time $\tau_{\tilde q_s}$ gives for all $s \in \{ 0, \ldots, r\}$
\beq
\sup\limits_{\tilde x_s, \tau_{\tilde q_s}}  
 \Sigma_s (\tau_{\tilde q_s}, \tilde x_s)
\le
\pli_{q \in \tV_s} \alpha_0 \;
|v_{\bar m_q,m_q}^{(q)} |_{1,\infty}
\eeq
Now the sum over all variables corresponding to lines $\ell $ on the path $P$ and the vertices $\tilde q_r, \ldots, \tilde q_0$, as well as the integrals over the times $\tau_{\tilde q_s}$ can be done similarly, giving a factor 
\beq
|v^{(\tilde q_0)}_{\bar m_{\tilde q_0}, m_{\tilde q_0}} |_{1} \;
\pli_{s=1}^r \alpha_\rho\; 
|v^{(\tilde q_s)}_{\bar m_{\tilde q_s}, m_{\tilde q_0}}|_{1,\infty} \; .
\eeq
Note that the variables in the vertex $\tilde q_0$, which corresponds to $A$, are all summed, so that at this place the norm is $|v^{(\tilde q_0)}_{\bar m_{\tilde q_0}, m_{\tilde q_0}} |_{1}$. 
Because $\rho > 0$, $\alpha_0 \le \alpha_\rho$. 
Thus
\beq
\scI_p
\le
\frac{\alpha_\rho^{p+1}}{p!}
\sum_{\dirT \in \cdirT_{p+2}}
\sum_{(\bar m_q, m_q)_{q}}
|v^{(\tilde q_0)}_{\bar m_{\tilde q_0}, m_{\tilde q_0}} |_{1}\;
\pli_{q=1 \atop q \ne \tilde q_0}^{p+2} 
|v_{\bar m_q,m_q}^{(q)} |_{1,\infty}\;
\sfrac{\de^{\bar m_q - \bth_q + m_q - \th_q}}{(\bar m_q - \bth_q)! (m_q -\th_q)!} \; .
\eeq
To restore the factorial in the definition of the norm, we rewrite 
\beq
\frac{1}{(m_q-\th_q)!}
=
\frac{1}{m_q!} \; {m_q \choose \th_q} \; \th_q!
\eeq
and similarly for the barred variables. 
We use (\ref{bththd}) to bound
\beq
\bth_q! \th_q! 
=
d_q! {d_q \choose \th_q}^{-1}
\le 
d_q!
\eeq
By Cayley's theorem, 
\beq
\frac{1}{p!} \pli_{q=1}^{p+2} d_q!
=
\frac{d_1 \ldots d_{p+2}}{|\{ T \in \scT_{p+2} : d_q (T) = d_q \forall q\}|} \; .
\eeq
By the arithmetic-geometric inequality and the constraint that the sum of incidence numbers is $2p+2$, $d_1 \ldots d_{p+2} \le 2^{p+1}$. Thus the dependence on the tree $\dirT$ has now reduced to a dependence on its graded incidence numbers 
$\bth_q$ and $\th_q$, and the sum can be bounded by a sum over 
sequences $\ul(d)=(d_1, \ldots d_{p+2}$ of incidence numbers, 
and a sum over $\bth_q$ and $\th_q$ satisfying (\ref{bththd}) at every $q$. The sum over $\ul{d}$ removes the constraint (\ref{bththd}), and we can now sum all $\th_q$ independently and get
\beq
\sum_{\th_q} {m_q \choose \th_q} \de^{m_q - \th_q} 
=
(1+\de)^{m_q} \; .
\eeq
All in all, we have
\beq
\scI_p 
\le 
(2\alpha_\rho)^{p+1}\; 
\sum_{(\bar m_q, m_q)_{q\in \{1, \ldots, p+2\}}}
\sfrac{\abs{v^{(\tilde q_0)}_{\bar m_{\tilde q_0}, m_{\tilde q_0}} }_{1}}{\bar m_{\tilde q_0}!m_{\tilde q_0}!} \;
\pli_{q=1 \atop q \ne \tilde q_0}^{p+2} 
\sfrac{\abs{v_{\bar m_q,m_q}^{(q)} }_{1,\infty}}{\bar m_q!m_q!} \; 
(1+\de)^{\bar m_q + m_q} 
\eeq
The right hand side equals that of (\ref{jofrale}), so Lemma \ref{Oplemma} is proven.  

Obviously, in the above proof, we could also have arranged the successive sum along the lines of $P$ so that $\tilde q_r$, which is associated to $B$, gets summed last, and in this way have obtained a bound with 
$\norm{A}_{1+\de} \; \tnorm{B}_{1+\de}$ instead of $\tnorm{A}_{1+\de} \; \norm{B}_{1+\de}$. This proves the last statement of Theorem \ref{expodecaythm}.

\section{Fermionic functional integrals}
\label{peeksec}

In this section, we briefly sketch how to obtain the tree-determinant expansion for the truncated correlation function. 
While one can, presumably, also get it from time-ordered expansions, it is much more convenient to use a functional integral representation, which also lends itself to more general purposes (e.g.\ multiscale analysis, which is useful in absence of a gap). Moreover, it will become clear that the field-theoretic generating function, from which the lengthy-looking formula (\ref{treedetformula}) is derived, really has a simple form and is amenable to standard techniques of mathematical quantum field theory and statistical mechanics. 

\subsection{Generating function}\label{gen-fu}
The truncated correlation function (\ref{truncAB}) can be obtained from the logarithm of the following generalized partition function. 
Let $r, s\in \bC$ and set 
\beq\label{Zab}
Z_{\beta H} (r,s)
=
\Tr \left(
\E^{-\beta H} \; 
(1+ r A(\tau)) \;  (1+sB)
\right)
\eeq
If this function is analytic in $r$ and $s$ for small enough $|r|$ and $|s|$, differentiation implies
\beq
\langle A(\tau) ; B \rangle_{\beta H}
=
\frac{\del^2}{\del r\del s} \ln Z_{\beta H} (r,s) \big\vert_{r=s=0} \; .
\eeq
Thus $\ln Z_{\beta H}$ can be regarded as the generating function for the truncated correlation of $A$ and $B$. 

Since $r$ and $s$ are just auxiliary parameters needed to derive a useful formula for $\langle A(\tau) ; B \rangle_{\beta H}$, it suffices to prove analyticity in a $\Lambda$- and $\beta$-dependent neighbourhood of $0$. 

{Recall that in the setting of Theorem \ref{treedetthm}, $H=H_0 + \lambda H_I$. Because $H_0$ is self-adjoint, $Z_{\beta H_0} > 0$. Because $\scF$ is finite-dimensional, $Z_{\beta H} (0,0)$ is an entire function of $\lambda$. If follows by continuity that there is $\lambda_0 = \lambda_0 (\beta, |\Lambda|, \norm{H_I}) > 0 $ such that for all $|\lambda| < 2 \lambda_0$, 
$Z_{\beta H} (0,0) \ne 0$. Since $Z_{\beta H} (r,s)$ is a polynomial in $r$ and $s$, it remains nonvanishing for small enough $r$ and $s$, hence $\log Z_{\beta H} (r,s)$ is analytic in $\lambda$, $r$, and $s$, in a small neighbourhood of $0$. }

In statistical mechanics, a polymer expansion is used to treat logarithms of partition functions, and to prove decay of correlations. In our setup, the algebra of observables is noncommutative. For this reason, we intermediately switch to a Grassmann integral representation of $Z_{\beta H} (r,s)$ and its logarithm. 
In this representation, the algebra involved is still noncommutative, but the even subalgebra is commutative, which is all we shall need to do bounds on the truncated function (\ref{truncAB}) by standard tree decay methods. 

Using Grassmann integrals has become standard in many-body theory, and our main theorem is an easy consequence of the estimates used in \cite{PeSa,Sa09,SaWi}, combined with a simple, but far-ranging, generalization of the determinant bound of \cite{PeSa} (proven in Appendix \ref{detapp}). 
Familiarity with this technique is, however, not required if one accepts the resulting formula for the tree expansion
of $\langle A(\tau) ; B \rangle_{\beta H}$, which is a particularly efficient representation of the perturbation expansion, in the sense that it makes fermionic sign cancellations easily visible, hence can be used to prove analyticity in the interaction. 

In the next subsection we include a few details about the functional integral, to make clear how the formula for the truncated correlation function that we use later on comes about. We include the present section also to make clear 

The rewriting in terms of Grassmann integrals done in the next subsection starts from a variant of the Lie-Trotter formula, and therefore the Grassmann algebras occurring here are finite-dimensional and that the `functional integrals' used here eally are linear functionals on these finite-dimensional spaces. This provides an additional proof that no `non-perturbatively small' remainder terms are missed in the sequel, so that it really suffices to bound coefficients of the perturbative expansion to show analyticity. Indeed, this rewriting and the proof of convergence in the limit $N \to \infty$ ($N$ is introduced below) does not require that the interaction is small or short-range; it holds in complete generality. 

\subsection{The functional integral representation}
It is well-known, see, e.g.\ the Appendix of \cite{msbook}, that the product of operators on Fock space, as well as the trace of any operator on Fock space, can be rewritten as a Grassmann integral by a simple algebraic procedure (summarized in Eqs.\ (42)-(44) of \cite{Sa09}), provided that these operators are given in normal ordered form. Indeed, from an algebraic point of view one can regard the combinatorial problem of an efficient calculation of traces involving $\E^{-\beta H}$ as equivalent to the one of writing it as $\E^{-\beta H} = \E^{-\beta H_0} D$, where $D$ is normal ordered, because then, the traces can easily be evaluated in terms of the quasifree density operator as determinants (for fermions) and permanents (for bosons). The Lie-Trotter product formula (\ref{Trottel}) is optimized so as to make this easy: if the interaction is given in normal-ordered form, each factor in this expression is either $\E^{-\veps H_0}$, or it is normal ordered. A motivation to do this via Grassmann integration is that the even subalgebra of a Grassmann algebra is commutative, so the logarithm of the partition function can be taken by standard methods of statistical mechanics, such as polymer expansions or by the interpolation technique used in \cite{SaWi}, which we also employ here. Since the algebra is done  in detail in Appendix B.5 of \cite{msbook} and in \cite{Sa09}, we only give the final result here. 

To distinguish the external time variable better, we denote it by $\teu $ instead of $\tau $ in the following. 
Let $0 < \teu < \beta$. We rewrite 
\beq
Z(r,s)
=
\Tr \left[
\E^{-(\beta-\teu) H}\;
(1+r A) \;
\E^{-\teu H} \;
(1+s B)
\right]
=
\lim\limits_{N \to \infty} Z_N (r,s)
\eeq
where $Z_N(r,s) = \Tr \; \Gamma_N (r,s)$ with 
\beq\label{Trottel}
\begin{split}
\Gamma_N (r,s)
=&
\left(
\E^{-\veps H_0}\;
(1-\veps H_I)
\right)^{N-2-k} \; 
\E^{-\veps H_0}\;
(1+r A)
\\
&
\left(
\E^{-\veps H_0}\;
(1-\veps H_I)
\right)^k
\E^{-\veps H_0}\;
(1+s B)
\end{split}
\eeq
Here $\veps = \frac{\beta}{N}$, and $k = k (\teu)$ is defined by 
\beq\label{kvont}
k = \lfloor \sfrac{\teu}{\beta} (N-2) \rfloor \; .
\eeq
Because $ \teu \in (0,\beta)$, both $k$ and $N-2-k$ go to infinity as $N \to \infty$. Thus, by the Lie product formula (in the form used in \cite{Sa09,msbook}) and by continuity of the trace on the finite-dimensional space $\scF$, $Z_N (r,s) \to Z(r,s)$ as $N \to \infty$. 

We denote the normal ordered form of $H_I$ by $\scH_I$, that of $A$ by $\scA$, and that of $B$ by $\scB$. 
Then our generating function $Z_N(r,s)$, given in (\ref{Zab}), can be rewritten as
\beq
\begin{split}
Z_N(r,s)
=
Z_0 \; 
\int \dd \mu_{\NbC} (\grab, \gra )
&\pli_{j=N-1}^{k+2} (1-\veps \scH_I (\grab_j, \gra_j)) \; 
(1 + r \scA (\grab_{k+1}, \gra_{k+1})) 
\\
&
\mkern10mu
\pli_{j=k}^{1} 
\mkern10mu
(1-\veps \scH_I (\grab_j, \gra_j)) \; 
(1 + s \scB (\grab_{0}, \gra_{0})) 
\end{split}
\eeq
where
\beq
Z_0 = Z_{\beta H_0} = \det (1 + \E^{\beta H_0}) \; ,
\eeq
and where we use the convention for product notation that $\pli_{j=k}^1 b_j = b_k b_{k-1} \ldots b_1$. Keeping the correct ordering is important if $H_I$ contains odd interaction terms. Here $\dd\mu_{\NbC}$ is the normalized Grassmann Gaussian measure on the Grassmann algebra with generators 
$(\grab_j (x), \gra_j(x))_{x \in \Lambda, j \in \{0, \ldots, N-1\}}$ and 
covariance
\beq
\NbC_{j,x; j',x'}
=
\scC (\tau^{(j)} - \tau^{(j')}, \Hchli)_{x,x'}
=
\covC(\tau^{(j)},x;\tau^{(j')},x') \; .
\eeq 
Thus, in the discretization implied by the choice of leaving $H_0$ in the exponent instead of writing $1-\veps H_I$ everywhere in (\ref{Trottel}), $\NbC$ is simply the fermionic covariance $\covC$ of (\ref{fermcov}), evaluated at the discrete times 
\beq\label{Riemerl}
\tau^{(j)} = j \veps = \frac{j}{N}\beta \; .
\eeq
The Grassmann Gaussian `measure' is not really a measure but rather a linear functional on the Grassmann algebra. As such it is uniquely defined by its action on monomials, which is
\beq
\int \dd \mu_{\NbC} (\grab, \gra)\;
\pli_{\bar k=\bar m}^{1} \grab_{j_{\bar k}, x_{\bar k}} \;
\pli_{k=1}^{m} \gra_{j'_k, x'_k} 
=
\delta_{\bar m, m}\;
\det \Gamma_m
\eeq
where $\Gamma_m $ is the $m \times m$ matrix with elements
\beq
\Gamma_{\bar k, k} 
=
\covC (\tau^{(j_{\bar k})}, x_{\bar k} ; \tau^{(j'_k)}, x'_k) \; .
\eeq
For the special case of a monomial of degree two, this gives the fermionic covariance: $\int \dd \mu_{\NbC} (\grab, \gra)\;\grab_{j, x} \;
\gra_{j', x'} = \covC(\tau^{(j)},x;\tau^{(j')},x')$. 

A simple application of the product inequality for the norm $\tnorm{\cdot}_1$ yields bounds for  Grassmann integrals that allow to exponentiate $1-\veps \scH_I  \to \E^{-\veps \scH_I}$ without changing the limit $N \to \infty$. 
For details, see Theorem B.14 of \cite{msbook} and equations (54)-(57) of \cite{Sa09}.

We now specialize to an even interaction $H_I$, in which case all factors commute and can be combined into a single exponential. 
Thus, in summary $Z_N (r,s) = Z_0 \; \scZ_N(r,s)$ with 
\beq\label{perfgoo}
\begin{split}
\scZ_N(r,s)
=
\int & \dd \mu_{\NbC} (\grab, \gra)\;
\E^{-\int_{\dot \bT} \scH_I(\grab(\tau),\gra(\tau) ) \dd \tau}\;
\\
&(1+ r \scA (\grab(\tilde \tau), \gra(\tilde \tau))) \; 
(1+ s \scB(\grab(0), \gra(0))
\end{split}
\eeq 
where we have labelled the Grassmann variables by $\tau^{(j)}$ instead of $j$, denoted $\tilde \tau = \tau^{(k(\teu))}$, with $k(\teu)$ given in (\ref{kvont}), $\bT = \{ j \veps : j \in 0, \ldots N-1\}$, and $\dot \bT = \bT \setminus \{ 0, \tilde \tau\}$, and used a continuum notation $\int \dd \tau$ for \neu{$\veps$ times} the sum over $j$, since the latter corresponds to the Riemann sum for this integral defined by the partition (\ref{Riemerl}). 
{Eq.\ (\ref{treedetformula}) can be derived straightforwardly from 
(\ref{perfgoo}). Instead of including this here, we briefly indicate how it can be brought into the standard form used in \cite{SaWi}. To this end, we note that the truncated correlation function is linear in $A$ and $B$, so higher orders in $r$ and $s$ do not matter. Thus we may replace $1+ r \scA (\grab(\tilde \tau), \gra(\tilde \tau))$  by $\exp \left[r \scA (\grab(\tilde \tau), \gra(\tilde \tau))\right]$, likewise for the term involving $\scB$, and obtain
\beq
\scZ_N(r,s)
=
\int \dd \mu_{\NbC} (\grab, \gra)\;
\E^{-\int_{\bT} \tilde\scH_I(\grab(\tau),\gra(\tau) ) \dd \tau}
\eeq
with
\beq
\begin{split}
\tilde\scH_I (\grab(\tau),\gra(\tau) )
=
\scH_I (\grab(\tau),\gra(\tau) )
&-
 r \delta (\tau,\tilde \tau) \scA (\grab(\tilde \tau), \gra(\tilde \tau)) \\
& -
 s \delta (\tau,0) \scB (\grab(\tilde \tau), \gra(\tilde \tau)) 
 \end{split}
\eeq
\neu{(and $\delta(\tau , \tau') = \veps^{-1} \delta_{j,j'}$ for $\tau=\tau^{(j)}$ and $\tau' = \tau^{(j')}$)}. 
This is in the form of a Grassmann Gaussian convolution, evaluated at zero external fields, that is, exactly the starting point of \cite{SaWi,PeSa}. Thus the results of these papers apply, in particular the interpolation formula, Theorem 3, of \cite{SaWi}, and the analyticity theorem, Theorem 4.5 of \cite{PeSa}, hold. The proof of analyticity is also included in our proof of exponential decay, given in Section \ref{proofsec}.
We note again that, although we have written that proof in the time continuum limit $N \to \infty$, the same estimates apply at finite $N$ because everything is streamlined such that the fermionic covariance at finite $N$ is the same function as the one for $N \to \infty$, just evaluated at discrete times, so the determinant bound is identical. For $N < \infty$, the decay bounds contain Riemann sums for the $\tau$-integrals, but these have the same estimates (up to $O(N^{-\alpha})$, $\alpha > 0$) as in the continuum, because the Riemann sums contain only uniformly bounded, continuous functions of $\tau$. Thus the convergence as $N \to \infty$ is uniform on compact sets in $\Lambda$ and $\beta$ if $\alpha_\rho$ is uniform, and the limiting function is therefore analytic in a disk of uniform radius.}

\appendix
\section{Determinant bound}\label{detapp}

In this appendix, we provide a determinant bound for fermionic covariances associated to arbitrary self-adjoint operators $\Hchli$ on general Hilbert spaces.
The argument will be a simple extension of that in \cite{PeSa}, the only change being that the Fourier transform used there is replaced by the spectral theorem. 

A different proof was given using noncommutative H\" older inequalities in \cite{BruPe}, and it gives a determinant constant that is a factor $2$ smaller. 
The proof we give here is elementary. 

The notion of determinant bound is given in Definition 1.2 of \cite{PeSa}. 

\begin{satz}\label{Hachsatz}
Let $\gH$ be a Hilbert space and $\Hach$ a self-adjoint operator on $\gH$. For $u,u' \in \gH$ and $\tau,\tau' \in [0,\beta]$ define
\beq\label{Cuup}
C_{\Hach} (\tau,u;\tau',u')
=
\langle u \mid 
\scC (\tau-\tau', \Hach) \; u'\rangle  \; .
\eeq
with $\scC$ given by (\ref{scCdef}). Then $C_{\Hach}$ has determinant bound 
\beq\label{depp}
\delta \le 2 \sqrt{\norm{u} \, \norm{u'}} \; .
\eeq
\end{satz}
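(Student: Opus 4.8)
The plan is to exhibit a \emph{Gram representation} of the kernel $C_{\Hach}$ and then invoke the Gram--Hadamard inequality, since (in the sense of Definition 1.2 of \cite{PeSa}) producing such a representation with controlled vector norms is exactly what it means to have a determinant bound. Concretely, I would construct an auxiliary Hilbert space $\mathcal K$ together with maps $(\tau,u)\mapsto F_{\tau,u}\in\mathcal K$ and $(\tau',u')\mapsto G_{\tau',u'}\in\mathcal K$, linear in $u$ and $u'$ respectively, such that
\[
C_{\Hach}(\tau,u;\tau',u') = \langle F_{\tau,u}\mid G_{\tau',u'}\rangle_{\mathcal K},
\qquad
\norm{F_{\tau,u}}\le \sqrt{2}\,\norm{u},
\quad
\norm{G_{\tau',u'}}\le\sqrt{2}\,\norm{u'},
\]
uniformly in $\tau,\tau'\in[0,\beta)$ and in $\beta>0$. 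Gram--Hadamard then gives, for any finite family of labels, $\abs{\det\left(C_{\Hach}(\tau_i,u_i;\tau'_j,u'_j)\right)}\le\prod_i\norm{F_{\tau_i,u_i}}\,\prod_j\norm{G_{\tau'_j,u'_j}}\le 2^{n}\prod_i\norm{u_i}\,\prod_j\norm{u'_j}$, which is the determinant bound \eqref{depp} in the normalization of \cite{PeSa} (determinant constant $2$, with each field carrying its weight $\norm{u}$).

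The \emph{only} change from \cite{PeSa} is how the energy variable is handled. There, translation invariance allows Fourier transformation, turning $\Hach$ into multiplication by a dispersion $E(k)$ and reducing the covariance to a $k$-sum of \emph{scalar} Fermi propagators $\scC(\tau-\tau',E(k))$. Here I would instead use the spectral theorem: writing $\Hach=\int_{\bR}\lambda\,dP_\lambda$, one has
\[
C_{\Hach}(\tau,u;\tau',u') = \int_{\bR}\scC(\tau-\tau',\lambda)\; d\langle u\mid P_\lambda\, u'\rangle .
\]
Thus everything reduces to a Gram representation of the scalar propagator $\scC(s,\lambda)$ valid for each real $\lambda$ separately, with the $k$-sum of \cite{PeSa} replaced by integration against the spectral measure.

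For the scalar building block I would take over the construction of \cite{PeSa} verbatim: for every $\lambda\in\bR$ there are vectors $a_\tau(\lambda),b_{\tau'}(\lambda)$ in a fixed auxiliary space $\mathcal K_0$ with $\scC(\tau-\tau',\lambda)=\langle a_\tau(\lambda)\mid b_{\tau'}(\lambda)\rangle_{\mathcal K_0}$ and $\norm{a_\tau(\lambda)}^2,\norm{b_{\tau'}(\lambda)}^2\le 2$, uniformly in $\lambda,\tau,\tau',\beta$. I would then lift this to the operator level by the ($\mathcal K_0$-valued) spectral calculus: define bounded operators $A_\tau=a_\tau(\Hach)$ and $B_{\tau'}=b_{\tau'}(\Hach)$ from $\gH$ into $\mathcal K_0\otimes\gH$, so that the scalar identity and functional calculus give $\scC(\tau-\tau',\Hach)=A_\tau^{*}B_{\tau'}$. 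Setting $F_{\tau,u}=A_\tau u$ and $G_{\tau',u'}=B_{\tau'}u'$ in $\mathcal K=\mathcal K_0\otimes\gH$ reproduces $C_{\Hach}(\tau,u;\tau',u')=\langle A_\tau u\mid B_{\tau'}u'\rangle$, while
\[
\norm{A_\tau u}^2=\langle u\mid A_\tau^{*}A_\tau\, u\rangle=\int_{\bR}\norm{a_\tau(\lambda)}^2\,d\langle u\mid P_\lambda\,u\rangle\le 2\,\norm{u}^2 ,
\]
and likewise for $B_{\tau'}$, which delivers the $\sqrt{2}$ bounds needed above.

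The \emph{main obstacle} lives entirely in the scalar step, and it is precisely what makes \cite{PeSa} nontrivial. The obvious Matsubara-frequency splitting $\tfrac{1}{\I\omega-\lambda}=\overline{c_\omega}\,d_\omega$ \emph{cannot} have both $(c_\omega)$ and $(d_\omega)$ square-summable, since $\sum_\omega\abs{\I\omega-\lambda}^{-1}$ diverges logarithmically; one must instead use the genuinely convergent representation of \cite{PeSa} (their ``Matsubara UV'' analysis), which exploits that $\tau$ ranges over the bounded interval $[0,\beta)$ and encodes the time-ordering step in the infinite-dimensional $\mathcal K_0$. The only point to verify beyond \cite{PeSa} is that their norm bound $\norm{a_\tau(\lambda)}^2\le 2$ holds pointwise in the continuous spectral parameter $\lambda$ rather than merely at the discrete values $E(k)$; since their estimate is uniform in the energy, this is immediate, and no further work is needed once the $k$-sum is read as a spectral integral. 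I expect the factor $2$ (rather than the optimal $1$) to be an artifact of bounding the auxiliary norm by the sum of the contributions of the two pieces $\bbbone_{\tau\le0}$ and $\bbbone_{\tau>0}$ of $\scC$ in \eqref{scCdef}, which is exactly the loss avoided by the sharper noncommutative-Hölder argument of \cite{BruPe}.
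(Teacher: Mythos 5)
Your overall architecture---replacing the Fourier transform of \cite{PeSa} by the spectral theorem and lifting scalar Gram vectors through the functional calculus $A_\tau=a_\tau(\Hach)$---is exactly the route the paper takes, and that part of your argument is sound. The gap is the scalar building block you feed into it. You claim that \cite{PeSa} provides, for each fixed $\lambda$, a Gram representation of the \emph{full} time-ordered propagator, $\scC(\tau-\tau',\lambda)=\langle a_\tau(\lambda)\mid b_{\tau'}(\lambda)\rangle_{\mathcal{K}_0}$ with $\norm{a_\tau(\lambda)}^2,\norm{b_{\tau'}(\lambda)}^2\le 2$ uniformly in $\tau,\tau',\beta,\lambda$. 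No such representation exists, and \cite{PeSa} does not assert one. Take $\lambda=0$ in (\ref{scCdef}): since $f_\beta(0)=\tfrac12$, one has $\scC(\tau-\tau',0)=\tfrac12-\bbbone_{\tau>\tau'}$, so up to a constant kernel (which has a trivial Gram representation) the full propagator \emph{is} the time-ordering indicator. Evaluating at interleaved points $\tau'_1<\tau_1<\tau'_2<\cdots<\tau'_n<\tau_n$ turns the indicator into the lower-triangular all-ones matrix, and the smallest possible value of $\sup_i\norm{u_i}\,\sup_j\norm{v_j}$ over representations $T_{ij}=\langle u_i\mid v_j\rangle$ of that matrix grows like $\log n$ (the classical triangular-truncation obstruction, i.e.\ the logarithmic growth of the Schur-multiplier/factorization norm of triangular projections). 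So no uniform Gram representation of the full kernel can exist: the logarithmic divergence you correctly flag in the Matsubara-frequency picture is not an artifact of that particular splitting, but intrinsic to the kernel, indicator included. Relatedly, your opening premise---that producing a Gram representation ``is exactly what it means'' to have a determinant bound in the sense of Definition 1.2 of \cite{PeSa}---is not correct: that definition is a family of determinant inequalities, a strictly weaker property, and the fermionic covariance is precisely an object that satisfies it without admitting a uniform Gram representation.

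What \cite{PeSa} actually supplies, and what the paper uses, is twofold: (i) Gram representations for the two time-ordered pieces $C^\pm_{\Hach}$ of (\ref{pmdec}) \emph{separately}, with vectors of norm at most $\norm{u}$; the paper constructs these from the spectral theorem together with the integral representation $\E^{-tE}=\frac{E}{\pi}\int_\bR\frac{\E^{\I\siv t}}{\siv^2+E^2}\,\dd\siv$ (valid for $E>0$, $t>0$), treating negative energies via $\E^{-tE}f_\beta(-E)=\E^{-(\beta-t)|E|}f_\beta(-|E|)$ and keeping the spectrum away from zero by the regularization $\Heps=\Hach+\veps\sgm(\Hach)$; and (ii) Theorem 1.3 of \cite{PeSa}, which states that a kernel of the time-ordered form $\bbbone_{\tau\le\tau'}C^-+\bbbone_{\tau>\tau'}C^+$ with Gram-representable pieces has a determinant bound equal to the \emph{sum} of the two Gram constants. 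It is this theorem---not any estimate on direct sums of Gram vectors---that tames the indicator functions and produces the factor $2$ in (\ref{depp}), and its output is a determinant bound only, never a Gram representation of the combined kernel. Your argument becomes correct, and essentially identical to the paper's, once you replace the claimed scalar representation of the full propagator by per-piece representations (your functional-calculus lift applies verbatim to each piece) and then invoke \cite{PeSa} Theorem 1.3 to handle the time-ordered combination.
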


\begin{proof}
By definition, 
\beq\label{pmdec}
C_{\Hach} (\tau,u;\tau',u')
=
\bbbone_{\tau \le \tau'} \; 
C_{\Hach}^- (\tau,u;\tau',u')
+
\bbbone_{\tau > \tau'} \; 
C_{\Hach}^+ (\tau,u;\tau',u')
\eeq
with
\beq
\begin{split}
C_{\Hach}^- (\tau,u;\tau',u')
&=\hphantom{-}
\bracket{u}{\mkern8mu f_\beta(\Hach) \mkern10mu\E^{- (\tau-\tau')\Hach} \, u'}
\\
C_{\Hach}^+ (\tau,u;\tau',u')
&=
- \bracket{u}{f_\beta(-\Hach) \; \E^{- (\tau-\tau')\Hach} \, u'} \;
\end{split}
\eeq
For $\veps > 0$ let $\Heps = \Hach + \veps \sgm(\Hach)$, where $\sgm (E) = -1$ for $E\le 0$ and $\sgm (E) = 1$ for $E > 0$. Then $\Heps$ has no spectrum in $(-\veps, \veps)$, and 
\beq
C^\pm_{H} (\tau,u;\tau',u')
=
\lim_{\veps\to 0} C^\pm_{\Heps} (\tau,u;\tau',u') .
\eeq
because at fixed $\beta$ and $t$, $E \mapsto \E^{-tE} f_\beta (E)$ is uniformly continuous on $\bR$. We shall show in Lemma \ref{kilogram} below that $C_{\Heps}^\pm (\tau,u;\tau',u')$ both have a Gram representation with Gram constant $\sqrt{\norm{u} \, \norm{u'}}$, which is independent of $\veps$. Eq.\ (\ref{depp}) then follows from Theorem 1.3 of \cite{PeSa} by taking the limit $\veps  \to 0$ (the factor $2$ is there because there are two summands in (\ref{pmdec})).
\end{proof}

\begin{koro}\label{cor: koro} The fermionic covariance to $\Hchli$ has determinant bound $\delta =2$, and
\beq\label{dedekind}
\abs{\det{}_{\bar\nu,\nu} M \odot \covC}
\le
\delta^{\bar\nu+\nu} \; .
\eeq
\end{koro}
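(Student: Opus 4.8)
The plan is to obtain both statements directly from Theorem \ref{Hachsatz} together with the notion of determinant bound of Definition 1.2 of \cite{PeSa}, since (\ref{dedekind}) is essentially that definition made explicit for the Hadamard product $M\odot\covC$ appearing in the tree-determinant formula. First I would read off the value $\delta=2$ as a specialization of Theorem \ref{Hachsatz}. Setting $\gH=\ell^2(\Lambda)$ and $\Hach=\Hchli$, the fermionic covariance (\ref{fermcov}) is exactly $\covC(\tau,x;\tau',x')=C_{\Hchli}(\tau,f_x;\tau',f_{x'})$ with the position-basis vectors $u=f_x$, $u'=f_{x'}$, and these satisfy $\norm{f_x}=\norm{f_{x'}}=1$. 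Hence (\ref{depp}) gives $\delta\le 2\sqrt{1\cdot 1}=2$, which is the first assertion.

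For the bound (\ref{dedekind}) I would invoke the Gram--Hadamard mechanism that underlies the determinant bound. The proof of Theorem \ref{Hachsatz} produces, for each of the two summands in (\ref{pmdec}), a Gram representation $C^{\pm}(\tau,f_x;\tau',f_{x'})=\langle \Phi^{\pm},\Psi^{\pm}\rangle$ of unit Gram constant, and Theorem 1.3 of \cite{PeSa} combines them --- the time-ordering indicators being the reason one obtains the sum $1+1=2$ rather than a smaller constant --- into the determinant bound $\delta=2$ for $\covC$. In the tree-determinant formula the covariance is multiplied entrywise by a matrix $M$ that is positive semidefinite with unit diagonal, since it is drawn from the interpolation measure $\langle\,\cdot\,\rangle_+$. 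Writing $M_{ab}=\langle e_a,e_b\rangle$ by Cholesky factorization, with $\norm{e_a}^2=M_{aa}\le 1$, each entry $\Gamma_{\bar\mu,\mu}=M_{q(\bar\mu),q(\mu)}\,\covC(\ldots)$ becomes an inner product of the tensorized vectors $e_{q(\bar\mu)}\otimes\Phi$ and $e_{q(\mu)}\otimes\Psi$. Thus $\Gamma$ is again a Gram matrix, and the Gram (Hadamard) inequality bounds $\abs{\det\Gamma}$ by a product of $\bar\nu+\nu$ norms, each controlled by the combined Gram constant $\delta=2$; this yields $\abs{\det{}_{\bar\nu,\nu}(M\odot\covC)}\le\delta^{\bar\nu+\nu}$. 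The case $\bar\nu\neq\nu$ is trivial, as the determinant is defined to vanish there.

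The main obstacle is not any new estimate but the correct bookkeeping of the two-summand structure of $\covC$: one must verify that the time-ordering decomposition (\ref{pmdec}) combines through Theorem 1.3 of \cite{PeSa} to the additive constant $\delta=2$ and matches the exponent $\bar\nu+\nu$ of the convention in Definition 1.2 of \cite{PeSa}, and that the interpolation matrix $M$ indeed carries the unit-diagonal normalization required for the Cholesky step. Both are standard in the constructive framework, so nothing beyond Theorem \ref{Hachsatz} is needed.
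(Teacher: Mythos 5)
Your proposal is correct and follows essentially the same route as the paper: specialize Theorem \ref{Hachsatz} to $\Hach=\Hchli$ with the norm-one position eigenfunctions to get $\delta=2$, then use that $M$, being positive semidefinite with unit diagonal, admits a Gram representation with constant $1$, so the Hadamard product $M\odot\covC$ falls under the determinant-bound definition of \cite{PeSa}. The paper simply cites this closure property ("by definition of the determinant bound") where you unfold the Cholesky/tensorization mechanism and the two-summand bookkeeping via Theorem 1.3 of \cite{PeSa}, but the substance is identical.
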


\begin{proof}
In Theorem \ref{Hachsatz}, take $\Hach=\Hchli$, and note that the position space eigenfunctions have norm $1$, so that $\delta = 2$. Because $M$ is a positive matrix with diagonal elements equal to $1$, it has its own Gram representation with Gram constant $1$ \cite{SaWi}. Thus by definition of the determinant bound, (\ref{dedekind}) holds.
\end{proof}

If instead we use conventions for a lattice with mesh size $\veps$, then $\delta = 2 \veps^{-\frac{d}{2}}$: the determinant bound diverges as the lattice spacing goes to zero. This reflects the ultraviolet problem of many-body theory that arises in continuum systems. 

\begin{lemma}\label{kilogram}
Let $\gH$ be a Hilbert space, $\Hach$ a self-adjoint operator on $\gH$, and $u,u' \in \gH$. If for some $\veps > 0$ the spectrum of $\Hach $ excludes the interval $(-\veps, \veps)$, then $C^\pm_{\Hach} (\tau,u;\tau',u')$, defined in $(\ref{pmdec})$, both have Gram constant $\sqrt{\norm{u} \, \norm{u'}}$.
That is, there is a Hilbert space $\ugH$ and for all $\tau \in [0,\beta)$, $u\in \gH$, and $s\in \{ \pm\}$, there are vectors $\Phi^{s}_{\tau,u}\in\ugH$ and $\tilde \Phi^{s}_{\tau,u}\in\ugH$ with $\Vert\Phi^{s}_{\tau,u}\Vert \le \norm{u}$, $\Vert\tilde \Phi^{s}_{\tau,u}\Vert \le \norm{u}$  and
\beq\label{grammel}
C^s_{\Hach} (\tau,u;\tau',u')
=
\langle \Phi^{s}_{\tau,u} \mid \tilde\Phi^{s}_{\tau',u'}\rangle_{\ugH} .
\eeq
\end{lemma}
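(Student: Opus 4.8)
The plan is to construct the Gram space $\ugH$ and the vectors $\Phi^s_{\tau,u},\tilde\Phi^s_{\tau',u'}$ explicitly, using the spectral theorem for $\Hach$ together with the gap hypothesis. First I would diagonalize: there is a unitary $W\colon\gH\to L^2(\Omega,\mu)$ under which $\Hach$ becomes multiplication by a real function $\hat E$, and I set $\hat u=Wu$. Because the spectrum avoids $(-\veps,\veps)$, we have $|\hat E|\ge\veps$ a.e., so the spectral projections $P_+,P_-$ of $\Hach$ onto the positive and negative parts of the spectrum split $\Omega=\Omega_+\sqcup\Omega_-$ cleanly, with $\hat E\ge\veps$ on $\Omega_+$ and $\hat E\le-\veps$ on $\Omega_-$. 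I treat $C^-$ in detail; the case $C^+$ then follows from the substitution $\Hach\mapsto-\Hach$ together with interchanging $\tau$ and $\tau'$ (which maps the relevant region $\tau>\tau'$ to $\tau'\le\tau$), the overall minus sign being harmless for a Gram estimate.

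The naive symmetric split $\Phi^-_{\tau,u}\sim\sqrt{f_\beta(\hat E)}\,e^{-\tau\hat E}\hat u$, $\tilde\Phi^-_{\tau',u'}\sim\sqrt{f_\beta(\hat E)}\,e^{\tau'\hat E}\hat u'$ reproduces $f_\beta(\hat E)e^{-(\tau-\tau')\hat E}$ in its $L^2$-inner product, but fails the norm bound: on the sector where $f_\beta$ does not decay, the factor norms $\int f_\beta(\hat E)e^{\mp2\tau\hat E}|\hat u|^2$ are unbounded in $\tau\in[0,\beta)$. Curing this asymmetry is the main obstacle, and I resolve it by treating the two sectors differently. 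On $\Omega_+$, where $f_\beta(\hat E)$ is exponentially small, I use the asymmetric product split
\[
\Phi^-_{\tau,u}\big|_{\Omega_+}=f_\beta(-\hat E)\,e^{-\tau\hat E}\,\hat u,\qquad
\tilde\Phi^-_{\tau',u'}\big|_{\Omega_+}=e^{-(\beta-\tau')\hat E}\,\hat u',
\]
whose product equals $f_\beta(-\hat E)e^{-\beta\hat E}e^{-(\tau-\tau')\hat E}=f_\beta(\hat E)e^{-(\tau-\tau')\hat E}$ by the identity $f_\beta(-\hat E)e^{-\beta\hat E}=f_\beta(\hat E)$. Since $\hat E>0$ and $0\le\tau,\tau'<\beta$, both factors are bounded by $1$ pointwise, so the vector norms are bounded by $\norm{P_+u}$ and $\norm{P_+u'}$; note this sector reproduces $C^-$ for all $\tau,\tau'$.

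The negative sector $\Omega_-$ is the hard part, and there I enlarge the Gram space by an auxiliary factor $L^2(\bR,d\lambda)$. Writing $a=-\hat E\ge\veps>0$, I set on $\Omega_-$
\[
\Phi^-_{\tau,u}(\lambda)=\sqrt{f_\beta(\hat E)}\,\sqrt{2a}\;e^{-a(\tau-\lambda)}\,\bbbone_{\lambda\le\tau}\,\hat u,
\]
and likewise $\tilde\Phi^-_{\tau',u'}$ with $(\tau,u)$ replaced by $(\tau',u')$. A direct computation gives $\norm{\Phi^-_{\tau,u}\big|_{\Omega_-}}^2=\int_{\Omega_-}f_\beta(\hat E)|\hat u|^2\,d\mu\le\norm{P_-u}^2$, the $\lambda$-integral contributing exactly $1$; here the gap is essential, since the normalization $\sqrt{2a}$ and the finiteness of $\int_{-\infty}^\tau e^{-2a(\tau-\lambda)}\,d\lambda=\frac{1}{2a}$ require $a\ge\veps>0$. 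For $\tau\le\tau'$ the overlapping supports restrict the $\lambda$-integral to $(-\infty,\tau]$, and one finds $\langle\Phi^-_{\tau,u},\tilde\Phi^-_{\tau',u'}\rangle\big|_{\Omega_-}=\int_{\Omega_-}f_\beta(\hat E)e^{-(\tau-\tau')\hat E}\,\overline{\hat u}\,\hat u'\,d\mu$, exactly the $\Omega_-$-contribution to $C^-$. For $\tau>\tau'$ the step functions instead yield the bounded quantity $e^{-a|\tau-\tau'|}$, which does not equal $C^-$; this is irrelevant, since $C^-$ enters $(\ref{pmdec})$ only through $\bbbone_{\tau\le\tau'}$.

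Finally I would assemble $\ugH=L^2(\Omega_+)\oplus\big(L^2(\bR)\otimes L^2(\Omega_-)\big)$ and take $\Phi^-_{\tau,u}$ to be the direct sum of the two pieces. Orthogonality of the summands gives $\norm{\Phi^-_{\tau,u}}^2\le\norm{P_+u}^2+\norm{P_-u}^2=\norm{u}^2$, and the analogous bound $\le\norm{u'}$ holds for $\tilde\Phi^-_{\tau',u'}$, so the Gram constant is $\sqrt{\norm{u}\,\norm{u'}}$; summing the two sector contributions reproduces $C^-_{\Hach}(\tau,u;\tau',u')$ on the range $\tau\le\tau'$ where it is used. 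Running the mirror construction for $C^+$ (swap the roles of $\Omega_+$ and $\Omega_-$, place the step-function factor on $\Omega_+$, and absorb the sign into one of the vectors) gives the representation valid for $\tau>\tau'$, completing the proof. The only genuinely delicate point is the auxiliary $L^2(\bR)$ representation, which trades the exact identity (lost for the non-physical time ordering) for uniformly bounded vector norms, and which breaks down precisely at $\hat E=0$—hence the necessity of the gap.
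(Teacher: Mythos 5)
Your construction is correct in substance, and it takes a genuinely different route from the paper's. The paper makes the same spectral split into $\{\Hach>0\}$ and $\{\Hach<0\}$, but on \emph{both} sectors it uses the Cauchy--Poisson integral representation $\E^{-tE}=\frac{E}{\pi}\int_\bR\frac{\E^{\I \siv t}}{\siv^2+E^2}\,\dd \siv$ (valid for $E>0$, $t>0$; on the sector of the wrong sign it first rewrites $\E^{-tE}f_\beta(-E)=\E^{-(\beta-t)|E|}f_\beta(-|E|)$), producing Gram vectors with oscillatory phases $\E^{-\I \siv\tau}$ in $\ell^2(\bI_\pm)\otimes L^2(\bR)$. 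You instead use, on the sector where the Fermi factor supplies the decay, the purely multiplicative split based on $f_\beta(-\hat E)\,\E^{-\beta\hat E}=f_\beta(\hat E)$ --- which needs no auxiliary space at all, and is in fact simpler than what the paper does on that sector --- and, on the other sector, the real-variable factorization $\E^{-a|\tau-\tau'|}=2a\int_\bR \bbbone_{\lambda\le\tau}\,\bbbone_{\lambda\le\tau'}\,\E^{-a(\tau-\lambda)}\E^{-a(\tau'-\lambda)}\,\dd\lambda$, i.e.\ overlapping one-sided exponentials; this is just the ``position-space'' counterpart of the paper's Poisson kernel, which is the Fourier representation of the same kernel $\E^{-a|t|}$. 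Your norm computations are right, the assembly into $\ugH$ with $\Vert\Phi^{s}_{\tau,u}\Vert\le\norm{u}$ is right, and restricting the identity \eqref{grammel} to the time-ordered region where each $C^s$ enters \eqref{pmdec} is not a defect: it is unavoidable (for $\tau-\tau'$ of the wrong sign $|C^s|$ can exceed $\norm{u}\norm{u'}$, so no Gram representation with these norms can exist there), and the paper's own proof makes exactly the same restriction (``we may assume $t=\tau-\tau'>0$'').

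One step is wrong as stated, although your proof survives because you also supply a correct alternative. The reduction of $C^+$ to $C^-$ by ``$\Hach\mapsto-\Hach$ together with interchanging $\tau$ and $\tau'$'' does not work: interchanging the two time slots while leaving $u,u'$ in place gives $C^+_{\Hach}(\tau,u;\tau',u')=-C^-_{-\Hach}(\tau',u;\tau,u')$, and feeding this into the Gram representation of $C^-_{-\Hach}$ yields the \emph{crossed} pairing $-\langle \Phi_{\tau',u}\mid\tilde\Phi_{\tau,u'}\rangle$, in which each vector mixes first-slot and second-slot data. That is not a Gram representation --- the whole point of \eqref{grammel} is that $\Phi^s$ depends only on $(\tau,u)$ and $\tilde\Phi^s$ only on $(\tau',u')$, since otherwise no Gram/Hadamard bound on determinants can be extracted. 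The correct symmetry is $\Hach\mapsto-\Hach$ combined with the slot-wise reflection $\tau\mapsto\beta-\tau$, $\tau'\mapsto\beta-\tau'$, which also maps $\{\tau>\tau'\}$ into the region where the $C^-$ representation holds. Alternatively, the closing ``mirror construction'' you describe (step functions on $\Omega_+$, multiplicative split on $\Omega_-$, sign absorbed into one vector) is correct and makes the symmetry argument unnecessary: on $\Omega_-$ the factors $f_\beta(\hat E)\E^{(\beta-\tau)\hat E}$ and $-\E^{\tau'\hat E}$ are bounded by $1$ and reproduce $-f_\beta(-\hat E)\E^{-(\tau-\tau')\hat E}$, and on $\Omega_+$ the one-sided exponentials give $\E^{-\hat E|\tau-\tau'|}$, which agrees with $\E^{-(\tau-\tau')\hat E}$ precisely on $\tau\ge\tau'$. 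A final quibble: the gap is not actually ``essential'' for your $\Omega_-$ construction, since all bounds are uniform in $a$ for any $a>0$; this uniformity is in fact what the paper exploits when it applies the lemma to $\Heps$ and takes $\veps\to0$ in the proof of Theorem \ref{Hachsatz}.
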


\begin{proof}
We give the proof for $C^+_{\Hach}$, hence may assume that $t=\tau-\tau'>0$. The proof for $C^-_{\Hach}$ is similar (with replacements as indicated in (61) of \cite{PeSa}).

For simplicity of presentation, assume first that $\gH$ is separable and that $\Hach$ has discrete spectrum. Denote the eigenvalues  of $\Hach$ by $E_n$, and the corresponding orthonormal basis of eigenvectors of $\gH$ by $\{\vphi_n:  n \in \bI\}$, where $\bI \subset \bZ$.  (Higher multiplicity of eigenvalues is allowed.) 
Let $\bI_\pm = \{ n \in \bI: \pm E_n > 0\}$. 
By the spectral theorem
\beq
C^+_{\Hach} (\tau,u;\tau',u')
=
C^+_{\Hach,+} (\tau,u;\tau',u')
+
C^+_{\Hach,-} (\tau,u;\tau',u')
\eeq
with
\beq\label{sumn}
C^+_{\Hach,\pm} (\tau,u;\tau',u')
=
\sum_{n\in \bI_\pm}
\langle u \mid \vphi_n \rangle \; 
\E^{-(\tau-\tau') E_n} f_\beta ( - E_n) \; 
\langle \vphi_n  \mid u'\rangle
\eeq
(and the sum is absolutely convergent for all $u,u' \in \gH$). 
Because $t=\tau-\tau'>0$,
\beq\label{sint}
\E^{-t E_n}
=
\frac{E_n}{\pi} 
\int_\bR
\frac{\E^{\I \siv t}}{\siv^2 + E_n^2} \; \dd \siv\; 
\eeq
holds for $E_n > 0$ and this integral is absolutely convergent, 
so
\beq
\begin{split}
C^+_{\Hach,+} (\tau,u;\tau',u')
&=
\sum_{n\in \bI_+}
\int_\bR \frac{\dd \siv}{\pi}\; 
\langle u \mid \vphi_n \rangle \E^{\I \siv \tau}\; 
\sfrac{E_n\; f_\beta ( - E_n)}{\pi (\siv^2 + E_n^2)} \;
\langle \vphi_n  \mid u'\rangle \E^{- \I \siv \tau'}
\\
&=
\langle \phi_{\tau,u} \mid \phi_{\tau',u'}\rangle_{\ugH_+}
\end{split}
\eeq
with $\ugH_+ = \ell^2 (\bI_+) \otimes L^2(\bR)$ and 
\beq
\phi_{\tau,u} (n,\siv)
=
\left(
\frac{E_n\; f_\beta ( - E_n)}{\siv^2 + E_n^2}
\right)^{\frac12}
\langle \vphi_n  \mid u\rangle \E^{- \I \siv \tau} \;.
\eeq
Moreover
\beq
\begin{split}
\norm{\phi_{\tau,u} }_{\ugH_+}^2
&=
\sum_{n\in \bI_+}
\int_\bR \frac{\dd \siv}{\pi}\; 
\frac{E_n\; f_\beta ( - E_n)}{\siv^2 + E_n^2}
\langle u \mid \phi_n \rangle \; 
\langle \phi_n  \mid u\rangle
\\
&=
\sum_{n\in \bI_+}
\langle u \mid \phi_n \rangle \; 
f_\beta ( - E_n)\;
\langle \phi_n  \mid u\rangle
\\
&=
\langle u \mid
\bbbone_{H > 0}\; f_\beta ( - H) \; u \rangle \; .
\end{split}
\eeq
If $E_n < 0$, we rewrite 
\beq
\E^{-t E_n } f_\beta (-E_n)
=
\E^{(\beta - t) E_n} f_\beta (E_n)
=
\E^{-(\beta - t) |E_n|} f_\beta (- |E_n|)
\eeq
and use (\ref{sint}) with $t$ replaced by $\beta - t \ge 0$ and $E_n$ replaced with $|E_n| >0$. Proceeding as before, we get
\beq
C^+_{\Hach,-} (\tau,u;\tau',u')
=
\langle \psi_{\tau,u,+} \mid \psi_{\tau',u',-}\rangle_{\ugH_-}
\eeq
with $\ugH_- = \ell^2 (\bI_-) \otimes L^2(\bR)$ and 
\beq
\psi_{\tau,u,\pm} (n,\siv)
=
\left(
\frac{|E_n|\; f_\beta ( - |E_n|)}{\siv^2 + |E_n|^2}
\right)^{\frac12}
\langle \vphi_n  \mid u\rangle \E^{- \I \siv (\tau\mp\frac{\beta}{2})}
\eeq
satisfying
\beq
\norm{\psi_{\tau,u,\pm} }_{\ugH_-}^2
\le
\langle u \mid
\bbbone_{\Hach < 0}\; f_\beta ( \Hach) \; u \rangle \; .
\eeq
Set $\ugH = \ugH_+ \oplus \ugH_-$ and 
$\Phi^{+}_{\tau,u} = \phi_{\tau,u} \oplus \psi_{\tau,u,+}$,
$\tilde \Phi^{+}_{\tau,u} = \phi_{\tau,u} \oplus \psi_{\tau,u,-}$, 
Then (\ref{grammel}) holds, and 
\beq
\begin{split}
\Vert\Phi^{+}_{\tau,u}\Vert^2
&=
\norm{\phi_{\tau,u}}_{\ugH_+}^2 
+
\norm{\psi_{\tau,u,+}}_{\ugH_-}^2
\\
&=
\bracket{u}{ \left[\bbbone_{H > 0}\; f_\beta ( - \Hach)
+ \bbbone_{\Hach < 0}\; f_\beta ( \Hach)
\right] u}
\\
&\le
\bracket{u}{ \left[\bbbone_{H > 0}
+ \bbbone_{\Hach < 0}
\right] u}
\\
&= \norm{u}^2 \; .
\end{split}
\eeq
The estimate for $\tilde\Phi^+_{\tau,u}$ is the same.

The proof for general self-adjoint Hamiltonians similar, except that we have to use the general form of the spectral theorem: there is a measure space $(\scM, \Sigma, \mu)$, a unitary map $U: \gH \to L^2 (\mu)$, and a measurable function $\eta: \scM \to \bR$ such that 
\beq
\langle u , \E^{-t H} f_\beta ( - H) u' \rangle
=
\int \dd \mu (\xi) \; 
\overline{ (Uu)(\xi)} \; 
\E^{-t \eta(\xi)} \phi_\beta ( - \eta(\xi))\; 
(Uu')(\xi) .
\eeq 
The rest of the argument is literally the same as before, replacing the $\ell^2$ spaces with the corresponding subspaces of $L^2(\mu)$ and the sums with integrals. 
\end{proof}

\section{Decay estimates: Finiteness of $\alpha_\rho$}
\label{sec: decay estimates}

The point here is that we do not only need to make explicit the exponential decay in $\tau$, which will naturally follow from a spectral assumption on $\Hchli$  but also use the summability in $x$ to get a bound that is indeed uniform in $|\Lambda|$. 
Let us write 
\beq
k(\zeta)= \sup_x\sum_{y} (1+d(x,y))^{-\zeta}, \qquad \zeta >0
\eeq
A possible bound on the decay constant $\alpha_\rho$ is given by the following theorem which is, at heart, a Combes-Thomas estimate (see also \cite{aza2017large}).  \neu{Choose the metric $d(\tau,\tau')$ in the definition of $\alpha_\rho$ to be given by the absolute value of  $\tau-\tau'$ modulo $\beta$.}
\begin{satz}\label{thm: decay from gap}
Choose \neu{$0< \epsilon, \rho < \norm{\Hchli}$} such that $\Hchli$ has no spectrum in the interval $[-\rho-\epsilon,\rho+\epsilon]$. Assume that
\beq
|h(x,x')| \leq K_{\nu} (1+d(x,x'))^{-\nu}, 
\eeq
for some  $\nu>0$ and  $K_{\nu} <\infty$. Let $n \in \mathbb{N}$ with $1 \leq n<\nu$. 
Then, ,
\begin{equation}  \label{eq: bound alpha}
\alpha_\rho  \leq   C  \sqrt{k(2n)} k(\nu-n)^n  \epsilon^{-n-1}
\end{equation}
where $C$ is a constant depending only on $n, K_{\nu} $ and \neu{$\norm{\Hchli}$}.
\end{satz}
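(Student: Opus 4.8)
The plan is to combine a polynomial Combes--Thomas estimate for the matrix elements of the one-particle covariance operator $\scC(t,\Hchli)$, which will produce the spatial factors $\sqrt{k(2n)}$ and $k(\nu-n)^n$, with the elementary fact that the spectral gap of $\Hchli$ forces $\scC(t,\Hchli)$ to decay in the Euclidean time $t=\tau-\tau'$, producing the exponential that is absorbed by the weight $\E^{\rho d(\tau,\tau')}$. Since the integrand defining $\alpha_\rho^\pm$ depends on $\tau,\tau'$ only through the torus metric $d(\tau,\tau')=d(t,0)$, I would first reduce both $\alpha_\rho^+$ and $\alpha_\rho^-$ to bounding $\int \E^{\rho d(t,0)}\,\Sigma(t)\,\dd t$ with $\Sigma(t)=\sup_x\sum_{x'}\abs{\scC(t,\Hchli)_{x,x'}}$; the two cases coincide because the Schur test below controls row and column sums symmetrically.

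Next I would extract the spatial decay. Fix $x$ and let $W$ be multiplication by the function $y\mapsto d(x,y)$, a bounded operator on $\ell^2(\Lambda)$ since $\Lambda$ is finite. Then $(\mathrm{ad}_W^a\Hchli)_{y,y'}=(d(x,y)-d(x,y'))^a\,h_0(y,y')$, so the triangle inequality and the hypothesis on $h_0$ give $\abs{(\mathrm{ad}_W^a\Hchli)_{y,y'}}\le K_\nu(1+d(y,y'))^{a-\nu}$, whence a Schur test yields $\norm{\mathrm{ad}_W^a\Hchli}\le K_\nu\,k(\nu-a)\le K_\nu\,k(\nu-n)$ for $1\le a\le n$ (using that $k$ is decreasing). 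Expanding $\mathrm{ad}_W^n$ of the resolvent $R(z)=(z-\Hchli)^{-1}$ as the usual finite sum of terms $R\,B_{a_1}R\cdots B_{a_j}R$ with $B_a=\mathrm{ad}_W^a\Hchli$, $a_i\ge1$, $\sum_i a_i=n$, $1\le j\le n$, and using $\norm{R(z)}\le \mathrm{dist}(z,\sigma(\Hchli))^{-1}$, I obtain the polynomial Combes--Thomas bound $\norm{\mathrm{ad}_W^n R(z)}\le C_n\,(K_\nu k(\nu-n))^n\,\mathrm{dist}(z,\sigma(\Hchli))^{-(n+1)}$, the worst term (all $a_i=1$, hence $n+1$ resolvents) governing the estimate. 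I would then represent $\scC(t,\Hchli)$ by the analytic functional calculus along a contour $\Gamma$ that encloses $\sigma(\Hchli)$, passes at distance of order $\epsilon$ from the gap edges $\pm(\rho+\epsilon)$, and flares away from the spectrum in the bulk; integrating the resolvent bound over $\Gamma$ gives $\norm{\mathrm{ad}_W^n\scC(t,\Hchli)}\le C_n'\,k(\nu-n)^n\,\epsilon^{-n}\,\E^{-(\rho+\epsilon/2)d(t,0)}$, where $C_n'$ depends only on $n,K_\nu,\norm{\Hchli}$ and the gain of one power of $\epsilon$ relative to the naive $\epsilon^{-(n+1)}$ comes from the $O(\epsilon)$ arc length near the edges. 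Finally a Cauchy--Schwarz split $\sum_{x'}\abs{\scC_{x,x'}}\le\big(\sum_{x'}(1+d)^{-2n}\big)^{1/2}\big(\sum_{x'}(1+d)^{2n}\abs{\scC_{x,x'}}^2\big)^{1/2}$ together with $\sum_{x'}(1+d(x,x'))^{2n}\abs{\scC_{x,x'}}^2\le C_n\big(\norm{\scC(t,\Hchli)}^2+\norm{\mathrm{ad}_W^n\scC(t,\Hchli)}^2\big)$ bounds $\Sigma(t)$ by $\sqrt{k(2n)}$ times the norms above.

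The temporal decay is elementary: since $\sigma(\Hchli)\cap[-\rho-\epsilon,\rho+\epsilon]=\emptyset$, estimating $\scC(t,E)=\mp f_\beta(\mp E)\,\E^{-tE}$ directly on $\abs{E}\ge\rho+\epsilon$ gives $\abs{\scC(t,E)}\le\E^{-(\rho+\epsilon)d(t,0)}$ for the torus metric, hence $\norm{\scC(t,\Hchli)}\le\E^{-(\rho+\epsilon)d(t,0)}$, and the same decay survives in the contour bound above. Assembling the pieces, $\E^{\rho d(t,0)}\Sigma(t)\le C_n''\,\sqrt{k(2n)}\,k(\nu-n)^n\,\epsilon^{-n}\,\E^{-(\epsilon/2)d(t,0)}$, and the remaining time integral contributes $\int\E^{-(\epsilon/2)d(t,0)}\,\dd t\le C\,\epsilon^{-1}$, which produces the claimed $\epsilon^{-(n+1)}$.

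I expect the main obstacle to be the second step, namely making the polynomial Combes--Thomas estimate and the contour representation fit together uniformly in $\beta$ and $\abs{\Lambda}$. Two points need care: organizing the commutator expansion so that the term with $n+1$ resolvents really dominates and $k(\nu-n)$ controls all lower commutators; and handling the poles of $f_\beta$ at $z\in\I\pi(2\bZ+1)/\beta$, which force $\Gamma$ to stay inside the analyticity strip (or, more cleanly, to split $f_\beta(\mp\Hchli)=\bbbone\mp f_\beta(\pm\Hchli)$ and continue only the entire factor $\E^{-tz}$), while checking that $\abs{\scC(t,z)}$ remains bounded on $\Gamma$ with a $\beta$-independent constant so that the final bound on $\alpha_\rho$ is genuinely uniform in $\beta$.
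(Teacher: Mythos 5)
Your proposal is correct and follows essentially the same route as the paper: the paper's proof is precisely your polynomial Combes--Thomas scheme, phrased via $\kappa$-derivatives of the conjugated resolvent $(z-\Hchli^{x,\kappa})^{-1}$ (identical to your $\mathrm{ad}_W^n$ of the resolvent), with the same Cauchy--Schwarz step producing $\sqrt{k(2n)}$, the same Schur-test bound giving $k(\nu-n)^n$, a rectangular contour at distance of order $\epsilon$ from the gap edge yielding $\epsilon^{-n}$, and the time integral supplying the final factor $\epsilon^{-1}$. The obstacle you flag (the poles of $f_\beta$ on the imaginary axis) is handled in the paper exactly as you anticipate: the covariance is split into positive/negative spectral parts (four terms in all), each analytically continued with a contour confined to one half-plane, so the contour never approaches the imaginary axis.
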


\begin{proof}[Proof of Theorem \ref{thm: decay from gap}]
We write $C$ for various constants depending only on $n, K_{\nu},\norm{\Hchli}$. 
We prove that, uniformly in $\beta$, and $\tau\leq 0$,
\begin{equation}  \label{eq: proof reduces}
 \sup_x\sum_{y}  |(x, \chi(\Hchli \leq 0){f_\beta(\Hchli)e^{-\tau \Hchli}} y ) |  \leq   C \E^{(\rho+\epsilon/2)\tau}   \sqrt{k(2n)} k(\nu-n)^n \epsilon^{-n}
 \end{equation}
which takes care of the first term in \eqref{scCdef}, restricted to the negative part of $\Hchli$. \neu{The other $3$ terms} are dealt with analogously, and the bound  \eqref{eq: bound alpha} follows. 

Let $\Gamma$ be a contour that encircles the negative spectrum of $\Hchli$ counterclockwise (and no other spectrum). By the spectral theorem
\beq
F(\Hchli)\equiv \chi(\Hchli \leq 0){f_\beta(\Hchli)e^{-\tau \Hchli}}  =  \frac{1}{2\pi \iu}\oint_{\Gamma} d z f_\beta(z) e^{-z \tau} \frac{1}{z-\Hchli}
\eeq
so that 
\begin{equation}\label{eq: to bound}
\sum_y |(x, F(\Hchli)y)| \leq  \frac{1}{2\pi} \oint_{\Gamma} |d z |  | f_\beta(z) e^{-z \tau}|  \sum_y |(x, \frac{1}{z-\Hchli} y)|
\end{equation}
To handle the $y$-sum, we introduce
\beq
\Hchli^{x,\kappa} =   \E^{\iu \kappa d(\cdot,x)} \Hchli  \E^{-\iu \kappa d(\cdot,x)},\qquad \kappa \in \mathbb{R}
\eeq 
where, for any $x\in \Lambda$, the unitary operator $\E^{-\iu \kappa d(\cdot,x)}$ acts by multiplication on $\Lambda$. 
For any $x\neq y$, we have then
\begin{equation}
\label{eq: rep derivative}
(x,\frac{1}{z-\Hchli}y) = (d(x,y))^{-n} (-i\partial_\kappa)^n  (x,\frac{1}{z-\Hchli^{x,\kappa}} y)\Big|_{\kappa=0}
\end{equation}
and using Cauchy-Schwarz, we get
\begin{equation}
\label{eq: bound propa}
\sum_{y\neq x} |(x, \frac{1}{z-\Hchli} y)| \leq  \sqrt{k(2n)} \norm{ \partial_\kappa^n  \frac{1}{z-\Hchli^{x,\kappa}} \Big|_{\kappa=0}} .
\end{equation}
We now control the operator norm on the right.  By repeatedly using the resolvent identity (for bounded operators and $z$ in the resolvent sets of $A,A+B$),
\beq
\frac{1}{z-(A+B)} - \frac{1}{z-A} =  \frac{1}{z-(A+B)} B \frac{1}{z-A},
\eeq
we find
\begin{equation}
\label{eq: bound norms}
\norm{ \partial_\kappa^n  \frac{1}{z-\Hchli^{x,\kappa}}}  \leq C  \norm{\frac{1}{z-\Hchli^{x,\kappa}}}^{n+1}
\sup_{(p_i):\sum_i p_i=n}\prod_i\norm{\partial^{p_i}_\kappa \Hchli^{x,\kappa}}
\end{equation}
Since $\Hchli^{y,\kappa}$ is unitarily equivalent to $\Hchli$ we can bound 
\beq 
\norm{\frac{1}{z-\Hchli^{x,\kappa}}}  \leq   \frac{1}{d(z,\sigma(\Hchli))}. 
\eeq
To deal with the norm of $\kappa$-derivatives on the right-hand side of \eqref{eq: bound norms}, we recall the following bound (Schur's test) for an operator $\Hach$ with kernel $h$;
\beq
\norm{\Hach}^2 \leq \big(\sup_{x}\sum_{x'}| h(x,x')| \big) \big(  \sup_{x'}\sum_{x}| h(x,x')| \big).
\eeq
Moreover, the kernel $h^{(p)}$ associated to the operator $\Hach=\partial^p_\kappa \Hchli^{y,\kappa}$ is bounded by
\beq
| h^{(p)}(x,x') |  \leq d(x,x')^p |h_0(x,x')|
\eeq
as follows from the triangle inequality.
Hence, uniformly in $y$, 
\beq
\sup_{(p_i):\sum_i p_i=n}\prod_i\norm{\partial^{p_i}_\kappa \Hchli^{y,\kappa}}   \leq   C k(\nu-n)^n
\eeq
so that the left-hand side of \eqref{eq: to bound} is bounded by.
\beq
  {C \sqrt{k(2n)} k(\nu-n)^n} \oint_{\Gamma} |d z |  | f_\beta(z) e^{-z \tau}| \left( \frac{1}{d(z,\sigma(\Hchli))^{n+1}} +  \frac{1}{d(z,\sigma(\Hchli)) } \right)
\eeq
\neu{where the second term between brackets accounts for the case $y = x$.
Let us now fix the contour $\Gamma$ to be a rectangle: one side is  $\iu [-\epsilon^{1/2},\epsilon^{1/2}] -( \rho+\epsilon/2)$ and the opposite side is $\iu [-\epsilon^{1/2},\epsilon^{1/2}] -(\norm{\Hchli}+\epsilon)$. Then the contour integral is bounded by 
$$
C\E^{-|\tau| (\rho+\epsilon/2)}  \big[\epsilon^{-n} +  (\epsilon^{-(n+1)/2} +\epsilon^{-1/2})\norm{\Hchli}\big].
$$
Since $\epsilon \leq \norm{\Hchli}$ and $C$ is allowed to depend on $\norm{\Hchli}$, the bracketed expression $\big[\ldots\big]$ can be replaced by $\epsilon^{-n}$.and hence we get \eqref{eq: proof reduces}}.

\end{proof}

To obtain Theorem \ref{previewgaptheo}, \neu{we use Theorem \ref{gaptheo} which requires  a bound on $\alpha_\rho$.} To get that bound, note that  for  an exponentially decaying kernel $h$, $k(\nu) $ is bounded uniformly in the volume for $2\nu>d$. Hence we can apply Theorem \ref{thm: decay from gap} with $n$ being the smallest integer greater than $d/2$.

\end{document}